\newtheorem{theorem}{Theorem}
\newtheorem{lemma}{Lemma}
\newtheorem{corollary}[theorem]{Corollary}
\newtheorem{proposition}{Proposition}
\newtheorem{remark}{Remark}
\newtheorem{definition}{Definition}
\newcommand{\bs}[1]{\boldsymbol{#1}}
\newcommand{\im}{\bs{\rm i}}
\title{{\Large {\bf Quantum graph walks I: \\ mapping to quantum walks}}}
\author{ 
{\small 
Yusuke Higuchi,$^{1}$ 
\footnote{higuchi@cas.showa-u.ac.jp 
}\quad 
Norio Konno,$^{2}$ 
\footnote{konno@ynu.ac.jp 
}\quad 
Iwao Sato,$^{3}$ 
\footnote{isato@oyama-ct.ac.jp 
}\quad  
Etsuo Segawa,$^{4}$ 
\footnote{Corresponding author: e-segawa@m.tohoku.ac.jp 
} 
}\\ 
{\scriptsize $^{1}$ 
Mathematics Laboratories, College of Arts and Sciences, Showa University
}\\
{\scriptsize 
Fuji-Yoshida, Yamanashi 403-005, Japan
}\\
{\scriptsize $^{2}$ 
Department of Applied Mathematics, Faculty of Engineering, Yokohama National University
}\\
{\scriptsize Hodogaya, Yokohama 240-8501, Japan
} \\
{\scriptsize $^3$ 
Oyama National College of Technology, 
}\\
{\scriptsize Oyama, Tochigi 323-0806, Japan
} \\
{\scriptsize $^4$ 
Graduate School of Information Sciences, Tohoku University
}\\
{\scriptsize Aoba, Sendai 980-8579, Japan
}\\ 
} 
\date{\empty }
\begin{document}
\maketitle

\par\noindent
\begin{small}
\par\noindent
{\bf Abstract}. 
We clarify that coined quantum walk is determined by only the choice of local quantum coins. 
To do so, we characterize coined quantum walks on graph by disjoint Euler circles with respect to symmetric arcs. 
In this paper, we introduce a new class of coined quantum walk by a special choice of quantum coins determined by 
corresponding quantum graph, called quantum graph walk. 
We show that a stationary state of quantum graph walk describes the eigenfunction of the quantum graph. 
\footnote[0]{
{\it Key words and phrases.} Quantum walk, quantum graph 
}

\end{small}

\setcounter{equation}{0}
\section{Introduction}
The quantum walk has been intensively studied from various kinds of view points, 
since it was treated as a part of quantum algorithm in quantum information \cite{ABNVW} and its strong efficiency 
to so called quantum speed up search was shown (see \cite{Ambainis_Rev} and its references).  
For example, the Anderson localization~\cite{OK,Hanover,Joye}, stochastic behaviors comparing with random walks~\cite{KonnoBook}, 
spectral analysis of the unit circle~\cite{CGMV} in relation to the CMV matrix~\cite{CMV}, graph isomorphic problem~\cite{EHSW}, experimental implementation~\cite{KFCSA}, and so on. 
Stanly Gudder is one of the originators of discrete-time quantum walk on graph~\cite{Gudder} (1988). 
At first, for simplicity, 
let us consider the walk on one dimensional lattice  following the Gudder's book. 
In this walk, each vertex has the left and right chiralities. 
The total state space here is spanned by the canonical basis corresponding to these chiralities , 
that is, $\{|j,R\rangle, |j,L\rangle : j\in \mathbb{Z}\}$. 
Let $\psi^{(L)}_n(j)$ and $\psi^{(R)}_n(j)$ as scaler valued left and right amplitudes at time $n$ position $x\in \mathbb{Z}$, respectively. 
The time evolution is given by the recurrence relations as follows : 
\begin{align}
\psi_{n}^{(R)}(j) &= a \psi_{n-1}^{(R)}(j-1)+\im b \psi_{n-1}^{(L)}(j+1), \label{tm1}\\
\psi_{n}^{(L)}(j) &= \im b \psi_{n-1}^{(R)}(j-1)+ a \psi_{n-1}^{(L)}(j+1), \label{tm2}
\end{align}
where $a,b\in \mathbb{R}$ with $a^2+b^2=1$. 
An equivalent expression for this time evolution, which will be important to our paper, is that: 
putting $\bs{\psi}_n(j)={}^T [\psi_{n}^{(R)}(j),\psi_{n}^{(L)}(j)]$, then 
\begin{equation}
\bs{\psi}_n(j)=Q\bs{\psi}_{n-1}(j-1)+P\bs{\psi}_{n-1}(j+1), 
\end{equation}
where 
\[ P=\begin{bmatrix} 0 & \im b \\ 0 & a \end{bmatrix},\;\;Q=\begin{bmatrix} a & 0 \\ \im b & 0 \end{bmatrix}. \]
We can interpret the quantum walk as a walk which has matrix valued weights $P$ and $Q$ associated with moving to left and right, respectively. 
Anyway, equations (\ref{tm1}) and (\ref{tm2}) imply that
\begin{equation}
\psi_{n+1}^{(J)}(j)+\psi_{n-1}^{(J)}(j) = a \left\{ \psi_{n}^{(J)}(j-1)+\psi_{n}^{(J)}(j+1) \right\},\;\;(J\in \{L,R\})
\end{equation}
which is a discrete-analogue of the mass less Klein-Gordon equation: 
\[ \frac{\partial^2 \psi_t(x)}{\partial t^2}=a \frac{\partial^2 \psi_t(x)}{\partial x^2}. \]
This is considered as one of the motivations for introducing this walk. 

We show another reason for why the total space of QW is described by not $\mathbb{Z}$ but $\mathbb{Z}\times \mathbb{C}^2$. 
An idea which is across our mind immediately to accomplish a quantization of a random walk on one dimensional lattice 
may be as follows: 
the probabilities $p$ and $1-p$ with $p>0$ that 
moving to left and right in random walk at each time step are replaced with some complexed valued weights $\alpha$ and $\beta$ 
so that its one step time operator is unitary. 
However we can easily see that the postulate of its unitarity implies $\alpha\overline{\beta}=0$. 
Thus the walk becomes quite trivial one, that is it always goes to the same direction. 
It is the no-go lemma~\cite{Meyer} of quantum walk. So we need left and right chiralities at each vertex in one dimensional lattice. 
Reference \cite{Severini} gives more detailed discussion for a general graphs around here. 

Now in the next, we consider the walk extending to a general graph. 
Let $\mathcal{G}(V,E)$ be a graph with vertex set $V(\mathcal{G})$ and edge set $E(\mathcal{G})$. 
In this paper, we denote the edge $e\in E(\mathcal{G})$ between vertices $u$ and $v$, as $e=\{u,v\}=\{v,u\}$. 
For $u\in V$, we define $N(u)=\{v\in V: \{u,v\}\in E \}$, and $d_u$ is degree of $u$, that is, $d_u=|N(u)|$. 
We define the set of symmetric arcs $D(G)$ as 
$\{(u,v)\in V(\mathcal{G})\times V(\mathcal{G}) : \{u,v\}\in E(\mathcal{G}) \}$. 
We denote arc $a=(u,v)\in D(G)$ as 
$o(a)=u$ and $t(a)=v$, where $o(a)$, and $t(a)$ are the origin and the terminus of $a$, respectively. 
For $a=(u,v)\in D(G)$, we denote $a^{-1}$ as $(v,u)$. 
The quantum walk on $\mathcal{G}(V,E)$ introduced by Gudder (1988) is defined as an analogue of the one dimensional lattice case. 
\begin{definition}\label{od} (Definition of quantum walk)
\begin{enumerate}
\item Total space: Let $\mathcal{H}$ be the total space of quantum walk. 
\[ \mathcal{H}=\ell^2(D(\mathcal{G}))=\mathrm{span}\{|u,v\rangle: (u,v)\in D(\mathcal{G})\}.  \]
Let $\mathcal{H}=\oplus \sum_{u\in V(\mathcal{G})}\mathcal{H}_u$ with 
$\mathcal{H}_u\cong \mathrm{span}\{ |u,v\rangle;v\in N(u) \}$. We denote the canonical basis of the subspace $\mathcal{H}_u$ as 
$\{|\bs{e}^{(u)}_v\rangle; v\in N(u)\}$. 
\item Time evolution: To every $(u,v)\in D$, 
we assign a non-trivial linear map $\mathcal{H}_u\to \mathcal{H}_v$ with its matrix representation $W_{(u,v)}$ so that 
$|D|\times |D|$ matrix on $\mathcal{H}$, $U$, defined by 
\[ \langle s,t|U|u,v \rangle=\mathbf{1}_{\{(u,s)\in D\}} \langle \bs{e}^{(s)}_t|W_{(u,s)}|\bs{e}^{(u)}_v\rangle \] 
is a $|D|$-dimensional unitary matrix. 
The time evolution is the iteration of the unitary operator $U$ with an initial state $\Psi_0\in \mathcal{H}$ with $||\Psi_0||=1$ 
such that 
	$\Psi_0 \stackrel{U}{\mapsto} \Psi_1 \stackrel{U}{\mapsto} \Psi_2\stackrel{U}{\mapsto}\cdots$, where $\Psi_j=U^j\Psi_0$. 
\item Measure\footnote{In this paper, we slightly modify the original definition of measure in \cite{Gudder} to emphasize a correspondence to the random walk on the same graph. 
In the original definition, indeed, $\Omega_n=\{(q_0,\dots,q_n)\in D(\mathcal{G})^n: t(q_j)=o(q_{j+1})\}$. 
}
: Denote $\Omega_n$ as the set of all the $n$-truncated possible paths from a vertex $o\in V(\mathcal{G})$. 
The measure $\mu_n: 2^{\Omega_n} \to [0,1]$ is defined as follows: 
for $A\in 2^\Omega_n$, 
\[ \mu_n^{\varphi}(A)=\left|\left|\sum_{\xi=(o,\xi_2\dots,\xi_{n})\in A} W_{(\xi_{n-1},\xi_n)}\cdots W_{(\xi_2,\xi_3)}W_{(o,\xi_2)}\bs{\varphi}\right|\right|^2, \]
where $\bs{\varphi}$ is a vector in $\mathcal{H}_{o}$. 
\end{enumerate}
\end{definition}
\begin{remark}
We can see this is an extension to a general graph of the one dimensional case in the following sense: 
for each arc $(i,j)$ with $|i-j|=1$, under the following one-to-one correspondence between the canonical basis, 
$|j,j-1\rangle \leftrightarrow |j,L\rangle$, $|j,j+1\rangle \leftrightarrow |j,R\rangle$, 
the weights of moving left and right at each vertex are $W_{(j,j+1)}=Q$ and $W_{(j,j-1)}=P$, ($j\in \mathbb{Z}$). 
\end{remark}
For $u\in V(\mathcal{G})$, the measure of $A_u=\{\xi\in \Omega_n: \xi_n=u\}\in 2^{\Omega_n}$ gives a distribution since 
$\sum_{u\in V(\mathcal{G})}\mu_n(A_u)=1$, and $\mu_n(A_u)\in [0,1]$. 
We define the finding probability of quantum walk at time $n$, position $u$ by $\mu_n(A_u)$. 
In this paper, we classify a special case of the discrete-time quantum walks in Def.\ref{od}, so called coined quantum walk 
which is defined by introducing local unitary operator (called {\it quantum coin}) for each $u\in V(\mathcal{G})$ on $\mathcal{H}_u$. 
In \cite{Watrous}, we can see the original form of the Grover walk on general graphs which are most intensively studied by many researchers. 
The Grover walk is in a special class of coined quantum walks called ``A-type quantum walks with flip flop shift" in this paper. See Sect.~\ref{sec:2} for its detailed definition.
We clarify that the investigation of A-type quantum walk is essential to study of coined quantum walk. 
More concretely, we find that for fixed local quantum coins, we can express any coined quantum walks by an A-type quantum walk with flip flop shift with a permutation (Theorem \ref{thm1}). 
Thus a choice of local quantum coins determines the coined quantum walk. 

By the way, a quantum graph is a system of 
a linear Schr{\"o}dinger equations on each Euclidean edge with boundary conditions at each joined part, i.e., vertex. 
The quantum graph is determined by triple of sequences of parameters $(\bs{L},\bs{\lambda},\bs{A})$ with respect to Euclidean edge lengths, boundary conditions, and vector potentials on edge, respectively. 
See Sect.~4.1 for the detailed setting of the quantum graph. 
Quantum graphs have been studied from varions fields of view. 
For the review and books on quantum graphs, see \cite{ES,Kuch,Smilansky}, for examples. 

In this paper, we apply the formulation of quantum graphs according to Smilansky and his group \cite{Smilansky,Smilansky2}. 
Anyway, what is the solution (eigenfunction) for the system of Schr{\"o}dinger equations which satisfy the boundary conditions simultaneously ? 
To answer it, in this paper, we define a coined quantum walk, $U^{(\bs{L,\lambda,A})}$, by a special choice of local quantum coins determined by corresponding quantum graph. 
We call this walk {\it quantum graph walk} whose more detailed definition is denoted in Sect.~4.2. 
The following result is our main theorem: 
\begin{theorem}\label{mainthm}
The quantum graph walk with parameters $(\bs{L},\bs{\lambda},\bs{A})$ has non-trivial eigenfunction satisfying all the boundary conditions at vertices simultaneously if and only if 
\[ U^{(\bs{\lambda,L,A})}\bs{a}_*(k)=\bs{a}_*(k). \]
Here a linear transformation of $\bs{a}_*(k)$ is the eigenfunction of the quantum graph. (See Eq.~(\ref{wq}) for an explicit expression for the linear transformation.)
\end{theorem}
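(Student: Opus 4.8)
The plan is to identify the quantum graph walk operator $U^{(\bs{\lambda,L,A})}$ with the bond-scattering (secular) operator of Smilansky's formulation, and then to prove the biconditional by constructing an explicit bijection between non-trivial eigenfunctions of the quantum graph at spectral parameter $k$ and the $1$-eigenvectors $\bs{a}_*(k)$ of this operator. The reconstruction map of Eq.~(\ref{wq}) plays the role of this bijection in one direction, and reading off plane-wave amplitudes supplies its inverse. So the whole argument reduces to two implications: (i) every eigenfunction determines a nonzero fixed vector $\bs{a}_*(k)$, and (ii) every nonzero fixed vector reconstructs, via~(\ref{wq}), to an eigenfunction satisfying all vertex conditions simultaneously.

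For the forward implication $(\Rightarrow)$, I would start from a non-trivial eigenfunction $\psi$ and restrict it to each Euclidean edge $e$. On $e$ the Schr{\"o}dinger equation with vector potential $A_e$ is a constant-coefficient second-order ODE whose solution space is two dimensional; writing $\psi|_e$ as a superposition of the two oppositely-travelling plane waves assigns to the two symmetric arcs over $e$ a pair of scalar amplitudes, which I collect into a vector $\bs{a}_*(k)\in\ell^2(D(\mathcal{G}))$. The key step is to show that the boundary condition at each vertex $v$, determined by $\lambda_v$, is equivalent to a linear relation between the amplitudes of the arcs entering $v$ and those leaving $v$; solving this relation yields exactly the local quantum coin $\sigma^{(v)}(k)$ prescribed in the definition of the quantum graph walk. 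Propagation along $e$ contributes the phase $e^{\im k L_e}$ (modified by $\int_e A_e$), which is precisely the weight carried by the flip-flop shift. Assembling the vertex relations over all $v$ and the edge phases over all $e$ then shows that the collected amplitudes are invariant under coin-then-shift, that is $U^{(\bs{\lambda,L,A})}\bs{a}_*(k)=\bs{a}_*(k)$; non-triviality of $\psi$ forces $\bs{a}_*(k)\neq 0$.

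For the reverse implication $(\Leftarrow)$, I would take any $\bs{a}_*(k)\neq 0$ with $U^{(\bs{\lambda,L,A})}\bs{a}_*(k)=\bs{a}_*(k)$ and define $\psi$ edge-by-edge through the linear transformation~(\ref{wq}). By construction $\psi|_e$ is a plane-wave superposition, hence automatically solves the edge Schr{\"o}dinger equation at parameter $k$, so nothing must be checked there. The content is that $\psi$ is globally well-defined and satisfies the boundary conditions at every vertex: reading the fixed-point equation coordinate-wise on the arcs incident to a fixed $v$ reproduces exactly the scattering relation of the forward step, which by the same equivalence is the boundary condition at $v$. Running this over all vertices yields an eigenfunction satisfying all conditions simultaneously.

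The main obstacle, and the technical heart of the argument, is the equivalence ``boundary condition at $v$ $\Longleftrightarrow$ local scattering relation,'' together with the correct bookkeeping of the two sources of phase --- the free propagation $e^{\im k L_e}$ and the Peierls phase from the vector potential $\bs{A}$ --- so that the coin--shift product matches the A-type-with-flip-flop structure guaranteed by Theorem~\ref{thm1}. Here one must also treat the degenerate values of $k$ carefully: at parameters where the vertex scattering matrix becomes singular or where the two plane waves degenerate, I would have to verify that the correspondence $\psi\leftrightarrow\bs{a}_*(k)$ remains a genuine bijection and, in particular, that non-triviality is preserved in both directions, so that no eigenfunction is lost and no spurious fixed vector is created.
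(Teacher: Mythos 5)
Your proposal takes essentially the same route as the paper: the paper extracts per-arc plane-wave amplitudes from the general solution (Eq.~(\ref{generalSol})), uses the arc-reversal condition to eliminate half of them (Lemma~\ref{b_and_a}), proves exactly your ``boundary condition at $v$ $\Leftrightarrow$ local scattering relation'' equivalence (Lemma~\ref{b_and_a_2}), and then identifies the resulting relation $\bs{a}_*(k)=C^\dagger(k)S_{\pi_{ff}}\bs{a}_*(k)$ with the fixed-point equation for $U^{(\bs{L},\bs{\lambda},\bs{A})}$ via the time-reversal Lemma~\ref{time_reverse}. The reconstruction direction via Eq.~(\ref{wq}) and your attention to degenerate $k$ (where the paper is in fact silent) match or slightly refine the paper's argument, so the approach is correct.
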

This paper is organized as follows. 
Section 2 is devoted to special quantum walks called coin-shift type quantum walks. 
The time evolution of coin-shift type quantum walk $U$ has two stages; coin operator $C$, and the shift operator $S$. 
In the coin-shift type quantum walk, the walk is characterized by the choice of coin operator. 
The next of two sections (Sects. 3 and 4), we treat two special classes of the discrete-time quantum walk. 
The first is the Szegedy walk introduced by Szegedy\cite{Szegedy} (2004), which is induced by a transition matrix of a random walk on the same graph. 
One of the strong facts is that a main part of eigensystems of the Szegedy walk is obtained once we know the eigensystem of the corresponding random walk. 
The Szegedy walk induced by the symmetric random walk, that is, a walker moves to a neighbor uniformly, becomes the famous Grover walk which is most intensively studied in the view point of quantum information. 
We have already know the eigensystem of the Szegedy walk is described by the spectrum of corresponding random walk. 
The second one is the {\it quantum graph walk} induced by a quantum graph \cite{Smilansky,Smilansky2}. 
As we have seen in Theorem \ref{mainthm}, we find that the Schr{\"o}dinger equation has non trivial solution iff the quantum graph walk has stationary amplitude. 
Moreover in the Neumann boundary condition, in the limit of edge length zero, we can see the Grover walk again. 
We give its proof and an expression for the eigenequation of $U^{(\bs{\lambda,L,A})}$ which is reduced to vertex size $|V|$ from square of edge size $2|E|$. 
The common part of the Szegedy walk and the quantum graph walk is the Grover walk. 
As far as we know, Ref.~\cite{Tanner} is the first paper which suggests a relation between the quantum graph and quantum walk. 
We more clarify and refine its relationship in this paper. 
One of the most important suggestions for a usefulness of mapping to quantum walks is Ref.~\cite{SandS}: Schanz and Smilansky~\cite{SandS} (2000) 
have already shown a localization 
of the quantum graph on random environment of $\mathbb{Z}$ mapping to a quantum scattering evolution 
which can be interpreted as nothing but now a day a spatial disorded discrete-time quantum walk with some modifications. 
Localization is a recent hot topic of quantum walks. For example, Refs.~\cite{OK,Hanover,Joye,KonnoBook,CGMV,KonnoLoc,KLS}. 
They gave a strictly positive return probability for annealed law by a combinatorial analysis before the quantum walks were so intensively studied. 

\section{Quantum walks on graph: reconsideration}\label{sec:2}
In this paper, we treat a connected and simple graph, that is, without self loops and multiedges. 
A path is a sequence of vertices of $\mathcal{G}$, $u_1,u_2,\dots,u_n$ with $(u_i,u_{i+1})\in D(\mathcal{G})$.  
The line digraph of $\overrightarrow{L}\mathcal{G}(V,A)$ with the vertex set $V$ and arc set $A$ is defined as follows: 
\[ V\left(\overrightarrow{L}\mathcal{G}\right)=D(\mathcal{G}),
\;\;A\left(\overrightarrow{L}\mathcal{G}\right)=\left\{\big((u,v),(v',w)\big)\in V\left(\overrightarrow{L}\mathcal{G}\right)^2: v=v'\right\}. \]
A cycle in a graph $\mathcal{G}$ is a path $u_1,u_2,\dots,u_{n},u_1$ with $(u_j,u_{j\oplus_n 1})\in D(H)$, where $l \oplus_n m=\mathrm{mod}((l+m),n)$. 
In particular, if all the $u_j$'s in the sequence are distinct, then we call it essential cycle. 
Note that if a cycle $(u_1,u_2),(u_2,u_3),\dots,(u_n,u_1)$ with $u_j\in V(\mathcal{G})$ in the line digraph $\overrightarrow{L}\mathcal{G}$ 
is essential, then the sequence $u_1,u_2,\dots,u_n,u_1$ of the original graph $\mathcal{G}$ is also cycle, however 
its essentiality is not ensured.  
On the other hand, if a sequence $u_1,u_2,\dots,u_n,u_1$ is essential in $\mathcal{G}$, then 
$(u_1,u_2),(u_2,u_3),\dots,(u_n,u_1)$ is also essential in $\overrightarrow{L}\mathcal{G}$. 
\begin{definition}
Let $\pi$ be a partition on $\overrightarrow{L}\mathcal{G}$ such that 
\begin{equation}
\pi: \overrightarrow{L}\mathcal{G}\to \{C_1,C_2,\dots,C_r\},
\end{equation}
where $C_j$ is an essential cycle of $\overrightarrow{L}\mathcal{G}$ and 
$\bigcup_{j=1}^r V(C_j)=V(\overrightarrow{L}\mathcal{G})$, $V(C_i)\cap V(C_j)=\emptyset$
for $i\neq j$. We denote the set of all the such partitions as $\Pi_{\mathcal{G}}$. 
\end{definition}
\begin{remark}
The following partition called ``flip flop partition"  belongs to $\Pi_{\mathcal{G}}$ for every undirected graph. 
\begin{align}
\pi_{ff}: \overrightarrow{L}\mathcal{G} &= \{ C_1,\dots,C_{|E(\mathcal{G})|} \}, 
\end{align}
where $V(C_j) = \{ e_j,e^{-1}_j \}$, and $A(C_j)=\{ (e_j,e^{-1}_j), (e^{-1}_j,e_j) \}$ for $e_j\in \mathcal{D}(\mathcal{G})$.
\end{remark}
The partition $\pi$ gives a way to decompose the graph $\mathcal{G}$ into mutually disjoint Euler circles with respect to arcs. 
Let $\Pi_u$ be the set of all one-to-one correspondence between 
\[ \{ |\bs{e}^{(u)}_v\rangle;v\in N(u) \} \leftrightarrow \{ |\bs{e}^{(v)}_u\rangle;v\in N(u) \}. \]
The former one corresponds to out-neighbor of $u$, and later one is in-neighbor of $u$. 
There are many partitions in $\Pi_\mathcal{G}$ in fact $|\Pi_\mathcal{G}|=\prod_{u\in V} d_u !$ since 
\[ \bigotimes_{u\in V(\mathcal{G})}\Pi_u \cong \Pi_\mathcal{G}.  \]
Since the out- and in-degrees of all the vertices in $C_j$ are $1$, 
we can define the following map $f_\pi$ (see also Fig.~1.): 
\begin{definition}
For $\pi\in \Pi_{\mathcal{G}}$ with $\overrightarrow{L}\mathcal{G}\stackrel{\pi}{\mapsto} \{C_1,\dots, C_r\}$, we define 
\begin{equation}
f_\pi: V\left(\overrightarrow{L}\mathcal{G}\right)\to V(\mathcal{G})
\end{equation}
such that for any $(i,j)\in V(\overrightarrow{L}\mathcal{G})$, 
\begin{equation}
 \big( (i,j),(j,f_\pi(i,j)) \big)\in \bigcup_{j=1}^r A(C_j) .
\end{equation}
\end{definition}
\begin{figure}
\begin{center}
\includegraphics[width=10cm]{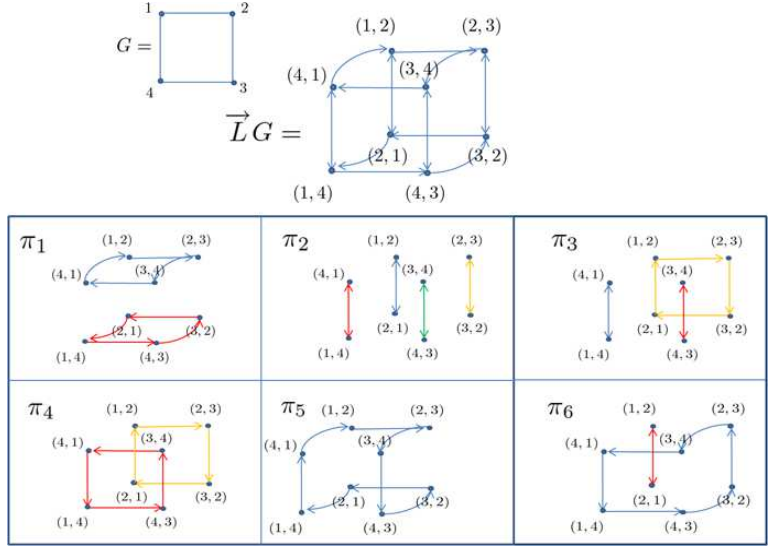}
\caption{{\scriptsize Decomposition into mutually disjoint Euler circles: 
The $2^4=16$ partitions of the circle with four vertices are classified into above $6$ patterns $[\pi_j]$ ($j=1,2,\dots,6$) 
with respect to automorphism. 
The cardinalities of each conjugacy classes $|[\pi_j]|$ $(j=1,2,\dots,6)$ are $1$, $1$, $4$, $2$, $4$, $4$, respectively. 
Indeed, $\Pi_G=\sum_{j}|[\pi_j]|=16$. 
We see $f_{\pi_1}(1,2)=3$, $f_{\pi_1}(3,4)=1$, $f_{\pi_2}(1,2)=1$, $f_{\pi_2}(3,4)=3$, $f_{\pi_3}(1,2)=3$, $f_{\pi_3}(3,4)=3$ and so on. 
 }  }
\end{center}
\end{figure}
From now on, we explain a special class of quantum walk called {\it coined quantum walks} on graph $\mathcal{G}$ under these setting. 
We choose a partition $\pi$ from $\Pi_{\mathcal{G}}$, 
and a sequence of unitary operators $\{H_j\}_{j=1}^{|V|}$, 
where $H_j$ is a $d_j$-dimensional unitary operator on the subspace $\mathcal{H}_j$. 
We call $H_j$ {\it local quantum coin} at vertex $j$. 
Then we present two types of time evolutions of QWs, $U^{(G)}$ and $U^{(A)}$, respectively. 
\begin{definition}
( Gudder type and Ambainis type QWs. ) 
\begin{align}
U^{(G)} &= CS_\pi, \\
U^{(A)} &= S_\pi C. 
\end{align}
Here $S_\pi$ and $C$ are called shift and coin flip operators defined by 
\begin{align}
S_\pi |i,j\rangle &= |j,f_\pi(i,j) \rangle, \\
C &= \sum_{j\in V(\mathcal{G})} \oplus H_j,
\end{align}
that is 
\[ C|i,j\rangle=\sum_{k\in N(i)}\langle \bs{e}_k^{(i)}|H_i|\bs{e}_j^{(i)}\rangle|i,k\rangle. \]
\end{definition}
The first type determined by $U^{(G)}$ is a generalization of Gudder (1988) of $d$-dimensional lattice case. 
The second one $U^{(A)}$ is motivated by the most popular time evolution for the study of QWs by Ambainis et al (2001). 
We call such time evolution G-type QW and A-type QW, respectively. 
The matrix representations of $U_G$ and $U_A$ are as follows: for any $(i,j)$, $(l,m)\in \mathcal{D}(G)$, 
\begin{align}
\langle l,m|U^{(G)}|i,j\rangle &= \bs{1}_{\{l\in N(i)\}}\delta_{j,l} \langle \bs{e}^{(j)}_{m}|H_j|\bs{e}_{f_\pi(i,j)}^{(j)}\rangle, \label{Gm}\\
\langle l,m|U^{(A)}|i,j\rangle &= \bs{1}_{\{l\in N(i)\}}\delta_{m,f_\pi(i,l)}\langle\bs{e}_l^{(i)}|H_i|\bs{e}_j^{(i)}\rangle.
\end{align}
The dynamics of quantum walk is explained as follows. See also Fig. 2. Let us consider the canonical base $|i,j\rangle$ be acted 
by $U=SC$.  
In the coin flip stage $C$, the coin flip operator changes the terminal vertex $j$ to $l$ with the complex valued weight $(H_i)_{j,l}$. 
Thus in this stage, we obtain a superposition around the vertex $i$. 
In the next stage, that is, the shift $S$, the initial vertex $i$ is changed to its terminal vertex $l$, and the terminal vertex $l$ is changed to $\pi(j,l)$. 
This is the A type quantum walk. In G-type quantum walk, the order of shift and coin is just exchanged. 
\begin{figure}
\begin{center}
\includegraphics[width=10cm]{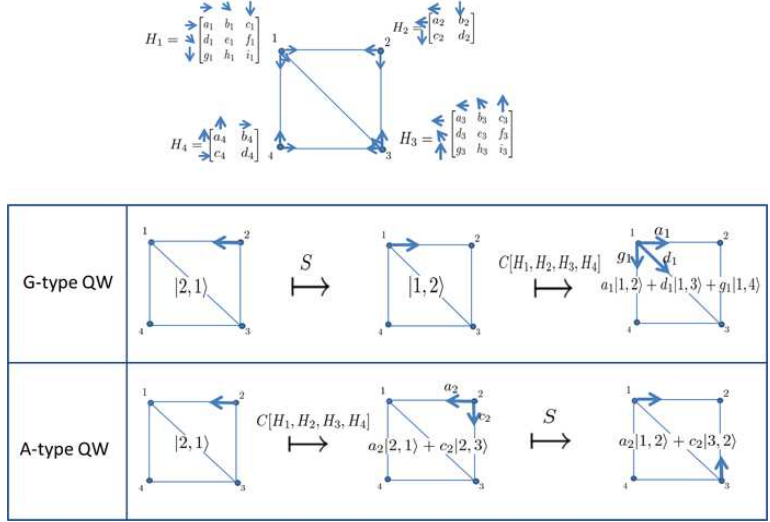}
\caption{ {\scriptsize Comparison between G-type and A-type QWs with flip flop $\pi_{ff}$:  We assign local quantum coins $H_j$ ($j=1,2,3,4$) 
which determines the weight of the ``pivot turn" at each vertex. 
The figure depicts the dynamics of G- and A- type QWs with $\pi=\pi_{ff}$ starting from the canonical base $|2,1\rangle$,  
that is, in G-type QW, $|2,1\rangle \stackrel{S}{\mapsto} |1,2\rangle \stackrel{C}{\mapsto} a_1|1,2\rangle+d_1|1,3\rangle+g_1|1,4\rangle$, 
on the other hand, in A-type QW, $|2,1\rangle \stackrel{C}{\mapsto} a_2|2,1\rangle+c_2|2,4\rangle \stackrel{S}{\mapsto} a_2|1,2\rangle+c_2|4,2\rangle$. 
}}
\end{center}
\end{figure}

\begin{remark}
The matrix valued weight $W_{(u,v)}$ associated with moving from $u$ to a neighbor $v$ in Definition~\ref{od} is as follows: 
\begin{equation}
W_{(u,v)}=
\begin{cases} 
	H_v|\bs{e}^{(v)}_{f_\pi(u,v)}\rangle\langle{\bs{e}^{(u)}_v}| & \text{: G-type, } \\  
        |\bs{e}^{(v)}_{f_\pi(u,v)}\rangle\langle{\bs{e}^{(u)}_v}| H_u & \text{: A-type. }
\end{cases}
\end{equation}
\end{remark}
A-type and G-type QWs are in dual relation with respect to the ``$1/2$" time gap: 
\begin{lemma}\label{dual}
For any $n\geq 0$, we have 
\begin{equation}
{U^{(G)}}^n=S_\pi^\dagger {U^{(A)}}^{n} S_\pi. 
\end{equation}
\end{lemma}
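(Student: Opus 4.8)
The plan is to reduce the statement to a single conjugation identity and then iterate it. First I would record that the shift operator $S_\pi$ is unitary. By definition it acts on the canonical basis by $S_\pi|i,j\rangle = |j,f_\pi(i,j)\rangle$, and since the partition $\pi$ decomposes $\overrightarrow{L}\mathcal{G}$ into vertex-disjoint essential cycles whose vertex sets cover all of $V(\overrightarrow{L}\mathcal{G})=D(\mathcal{G})$, the assignment $(i,j)\mapsto(j,f_\pi(i,j))$ merely advances each arc one step along its cycle. Hence $S_\pi$ permutes the canonical basis of $\mathcal{H}$, is a permutation matrix, and satisfies $S_\pi^\dagger S_\pi = S_\pi S_\pi^\dagger = I$.

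The key step is then the $n=1$ case, namely $U^{(G)} = S_\pi^\dagger U^{(A)} S_\pi$. This is immediate from the definitions $U^{(G)}=CS_\pi$ and $U^{(A)}=S_\pi C$: substituting and using unitarity of $S_\pi$,
\[ S_\pi^\dagger U^{(A)} S_\pi = S_\pi^\dagger (S_\pi C) S_\pi = (S_\pi^\dagger S_\pi) C S_\pi = C S_\pi = U^{(G)}. \]

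Finally I would upgrade this to all $n$ by a telescoping (equivalently, an induction). Writing ${U^{(G)}}^n = (S_\pi^\dagger U^{(A)} S_\pi)^n$ and expanding the product, each interior pair $S_\pi S_\pi^\dagger$ collapses to the identity, leaving $S_\pi^\dagger {U^{(A)}}^n S_\pi$; the induction version uses the $n=1$ identity together with the inductive hypothesis, and the base case $n=0$ is trivial since both sides are the identity operator.

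As for the main obstacle: there is essentially none of substance, since the lemma is just the elementary fact that conjugation by a unitary commutes with taking powers. The only point needing a moment's care is the unitarity, equivalently the bijectivity, of $S_\pi$, which rests on the cycle-partition structure of $\pi\in\Pi_{\mathcal{G}}$; once that is in hand, the remainder is the one-line computation above followed by the telescoping collapse.
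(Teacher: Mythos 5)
Your proposal is correct: the identity $U^{(G)} = C S_\pi = S_\pi^\dagger (S_\pi C) S_\pi = S_\pi^\dagger U^{(A)} S_\pi$ together with telescoping of the conjugation is exactly the argument, and your care in noting that the cycle-partition structure of $\pi$ makes $S_\pi$ a permutation matrix (hence unitary) is the only nontrivial ingredient. The paper states this lemma without proof, treating it as immediate from the definitions $U^{(G)}=CS_\pi$ and $U^{(A)}=S_\pi C$, so your write-up simply supplies the implicit one-line computation the authors had in mind.
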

Because of the unitarity of the time evolution of quantum walks, ${U^{(J)}}^{-1}$ is also unitary. 
What is the ${U^{(J)}}^{-1}$ ? The following theorem is related to a part of its answer. 
\begin{lemma}\label{time_reverse}
${U^{(J)}}^{-1}$ is also a time evolution of a quantum walk $(J\in \{A,G\})$ on the same graph $\mathcal{G}(V,E)$ 
if and only if the shift operator of $U^{(J)}$ is the flip flop. 
More concretely, denote $U^{(J)}_{\pi_{ff}}[H_j:j\in V]$ as the time evolution of type $J$ $(J\in \{A,G\})$ 
quantum walk with local quantum coins $\{H_j\}_{j=1}^{|V|}$ and the flip flop shift. Then we have 
\begin{equation}
{U^{(J)}_{\pi_{ff}}[H_j:j\in V]}^{-1} =U^{(\neg J)}_{\pi_{ff}}[H_j^{-1}:j\in V]. 
\end{equation}
where $\neg J=A$ $(J=G)$, $=G$ $(J=A)$. 
\end{lemma}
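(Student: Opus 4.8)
The plan is to compute ${U^{(J)}}^{-1}$ directly from the factorization $U^{(J)}=CS_\pi$ (or $S_\pi C$) and compare it against the allowed form of a quantum-walk time evolution. Since $U^{(A)}=S_\pi C$, its inverse is $C^{-1}S_\pi^{-1}=C^\dagger S_\pi^{-1}$. The coin part causes no trouble: $C^\dagger=\sum_{j}\oplus H_j^{-1}$ is again a legitimate coin operator, with local quantum coins $H_j^{-1}$. The entire question therefore reduces to understanding $S_\pi^{-1}$, and to decide when $C^\dagger S_\pi^{-1}$ can be rewritten as $S_{\pi'}C'$ or $C'S_{\pi'}$ for some admissible partition $\pi'$ and coins.

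First I would analyze the shift operator. By definition $S_\pi|i,j\rangle=|j,f_\pi(i,j)\rangle$, so $S_\pi$ sends the arc $(i,j)$ to the arc $(j,f_\pi(i,j))$; this is exactly the ``next arc'' map along the Euler circles determined by $\pi$. Hence $S_\pi$ is a permutation matrix on $\ell^2(D(\mathcal{G}))$, and $S_\pi^{-1}=S_\pi^\dagger$ is the ``previous arc'' map. The crucial observation is that a permutation of the arcs qualifies as a shift operator $S_{\pi'}$ for some $\pi'\in\Pi_{\mathcal{G}}$ only if it has the structural form $|i,j\rangle\mapsto|j,\,\cdot\,\rangle$, i.e.\ the origin of the image arc must equal the terminus $j$ of the input arc. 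For the flip-flop partition $\pi_{ff}$, the cycles are the two-arc circles $\{e,e^{-1}\}$, so $f_{\pi_{ff}}(i,j)=i$ and $S_{\pi_{ff}}|i,j\rangle=|j,i\rangle$ is an involution; thus $S_{\pi_{ff}}^{-1}=S_{\pi_{ff}}$ is again a flip-flop shift, and everything closes up. I would verify the displayed identity ${U^{(J)}_{\pi_{ff}}[H_j]}^{-1}=U^{(\neg J)}_{\pi_{ff}}[H_j^{-1}]$ by the short computation $(S_{\pi_{ff}}C)^{-1}=C^{-1}S_{\pi_{ff}}^{-1}=C^\dagger S_{\pi_{ff}}=U^{(G)}_{\pi_{ff}}[H_j^{-1}]$, and symmetrically for the other order, reading off that the type flips between $A$ and $G$ because the inverse reverses the order of the coin and shift factors.

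For the converse (the ``only if'' direction), I would argue that when $\pi\neq\pi_{ff}$ some Euler circle $C_k$ has length $\ge 3$, so $S_\pi$ is not an involution and $S_\pi^{-1}$ traverses the circles in the reverse direction. The reverse-arc map sends $(i,j)\mapsto(g(i,j),i)$, whose image arc has terminus $i$ rather than origin $j$; this is incompatible with the required form $|i,j\rangle\mapsto|j,\cdot\rangle$ of any genuine shift $S_{\pi'}$, so $C^\dagger S_\pi^{-1}$ cannot be put into coin-shift form on the same graph, and the inverse fails to be a quantum walk of either type. I expect this converse to be the main obstacle: one must rule out that a clever recombination with the coin operator (or a relabeling of the subspaces $\mathcal{H}_u$) disguises the reversed shift as a legitimate shift-times-coin. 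The cleanest way is to pin down an intrinsic, coin-independent invariant of a quantum-walk time evolution — namely that each $W_{(u,v)}$ is supported on arcs emanating from the fixed terminus $v$, forcing the underlying permutation to respect the origin/terminus incidence structure — and then show that $S_\pi^{-1}$ violates this invariant precisely when $\pi$ contains a circle of length exceeding two.
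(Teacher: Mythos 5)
Your handling of the displayed identity and of the inverse shift coincides with the paper's own proof: the paper writes $(U^{(G)})^{-1}=S_\pi^{-1}C^{-1}$, notes that $C^{-1}=\sum_j\oplus H_j^{-1}$ is again a coin operator, introduces the reversed cycles with their successor map $g_{\pi^*}$ (your ``previous arc'' map $g$), records $S_\pi^{-1}|i,j\rangle=|g_{\pi^*}(j,i),i\rangle$, and concludes that $S_\pi^{-1}$ is a shift operator iff $g_{\pi^*}(j,i)=j$ for every arc, i.e.\ iff $\pi=\pi_{ff}$; your flip-flop computation giving ${U^{(J)}_{\pi_{ff}}}^{-1}=U^{(\neg J)}_{\pi_{ff}}[H_j^{-1}:j\in V]$ is the same as the paper's. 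Note, however, that the paper stops exactly there: it tacitly reads the first sentence of the lemma as a statement about the factor $S_\pi^{-1}$, and never attempts to exclude the recombination issue you raise in your final paragraph.

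That final paragraph is where your proposal breaks down: the recombination you hope to rule out is genuinely possible, so the coin-independent invariant you are looking for does not exist. Since $k\mapsto g_{\pi^*}(k,j)$ is a bijection of $N(j)$, let $P_j=\sum_{k\in N(j)}|\bs{e}^{(j)}_{g_{\pi^*}(k,j)}\rangle\langle \bs{e}^{(j)}_k|$ be the induced permutation on $\mathcal{H}_j$. Then for every $\pi\in\Pi_{\mathcal{G}}$ and every choice of coins,
\begin{equation*}
\bigl(U^{(G)}_\pi[H_j:j\in V]\bigr)^{-1}=S_\pi^{-1}C^{-1}
=S_{\pi_{ff}}\Bigl(\sum_{j\in V}\oplus\,P_jH_j^{-1}\Bigr)
=U^{(A)}_{\pi_{ff}}[P_jH_j^{-1}:j\in V],
\end{equation*}
since both sides send $|i,j\rangle$ to $\sum_{k\in N(i)}\langle \bs{e}^{(i)}_k|H_i^{-1}|\bs{e}^{(i)}_j\rangle\,|g_{\pi^*}(k,i),i\rangle$. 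So the inverse of every G-type walk, for \emph{every} partition, is itself an A-type coined walk with flip-flop shift; equivalently, invert both sides of Eq.~(\ref{AtoG}) in Theorem~\ref{thm1}, whose proof uses only the uncontested flip-flop identity of this lemma, so this is not circular. Likewise the A-type inverse $C^{-1}S_\pi^{-1}$ always qualifies as a quantum walk in the sense of Definition~\ref{od}: take $W_{(i,s)}$ to be the nonzero rank-one map $\bigl(H_s^{-1}|\bs{e}^{(s)}_i\rangle\bigr)\langle\bs{e}^{(i)}_{j_s}|$, where $j_s$ is the unique neighbour of $i$ with $g_{\pi^*}(j_s,i)=s$. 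In particular, the invariant you propose (``each $W_{(u,v)}$ is supported on arcs emanating from $v$'') is satisfied automatically by these inverses, because it merely restates that $W_{(u,v)}$ maps into $\mathcal{H}_v$. Hence the strong converse you set out to prove --- that for $\pi\neq\pi_{ff}$ the inverse ``fails to be a quantum walk of either type'' --- is false; the lemma is true, and is proved in the paper, only in the narrow sense that the shift factor $S_\pi^{-1}$ of the inverse is itself a shift operator precisely when $\pi$ is the flip flop. Your incidence argument already establishes exactly that narrow statement, and no invariant argument can establish more.
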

In particular, if we choose local coins as self adjoint operators $H_j=H^\dagger_j$ such as the Grover coin $H_j=(2/d_j) J_{d_j}-I_{d_j}$ $(j\in V)$, 
\[ \left(U^{(J)}_{\pi_{ff}}\right)^{-1}=U^{(\neg J)}_{\pi_{ff}}. \]
where $J_m$ is the $m$-dimensional matrix whose elements are all one, and $I_m$ is the identity operator. 
\begin{proof}
Remark that
\begin{align} 
(U^{(G)})^{-1} &= (CS_\pi)^{-1}=S^{-1}_\pi C^{-1}. 
\end{align}
Note that $C^{-1}=\sum_{j=1}^{|V|}\oplus H_j^{-1}$ is also a coin flip operator. 
In the following, we concentrate on a necessary and sufficient condition for $\pi$ so that $S^{-1}_\pi$ is also a shift operator. 
For a partition $\pi\in \Pi_{\mathcal{G}}$ with $\pi: \overrightarrow{L}G\mapsto C_1 \oplus \cdots \oplus C_r$, we define $\pi^*$ as 
\begin{equation}\overrightarrow{L}G \mapsto C_1^{-1} \oplus \cdots \oplus C_r^{-1}. \end{equation}
Here for an essential cycle $C_k\subset \overrightarrow{L}G$, 
$(v_1,v_2)\to(v_2,v_3)\to\cdots\to(v_m,v_1)$, we define $C^{-1}$ as $(v_m,v_1)\to(v_{m-1},v_m)\to\cdots\to(v_1,v_2)$. 
Define $g_{\pi^*}: \bigcup_{j=1}^r V(C_j^{-1})\to V(\mathcal{G})$ such that 
\[\big((j,i),(i,g_{\pi^*}(j,i))\big)\in \bigcup_{j=1}^r A(C_j^{-1}). \]
Then it is hold that for $(i,j)\in \bigcup_{j=1}^rV(C_j)$, 
\begin{equation}S^{-1}_\pi|i,j\rangle=|g_{\pi^*}(j,i),i\rangle. \end{equation}
Therefore $S^{-1}_\pi$ is a shift operator if and only if $g_{\pi^*}(j,i)=j$, that is, $\pi$ is the flip flop. 
\end{proof} 
\begin{lemma}\label{change_pi}
For any $\pi,\pi'\in \Pi_{\mathcal{G}}$, for each vertex $j\in V(\mathcal{G})$, there exists a permutation $\mathcal{P}^{(j)}_{\pi,\pi'}$ 
on the canonical basis of $\mathcal{H}_j$, 
$\{|\boldsymbol{e}_k^{(j)}\rangle: k\in N(j)\}$, such that 
\begin{equation}
U_{\pi'}^{(G)}[H_j: j\in V(\mathcal{G})]=U_{\pi}^{(G)}[\widetilde{H}_j: j\in V(\mathcal{G})],
\end{equation}
where $\widetilde{H}_j=H_j\mathcal{P}^{(j)}_{\pi,\pi'}$. 
\end{lemma}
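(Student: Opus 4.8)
The plan is to reduce the whole statement to a purely combinatorial identity among local permutations, exploiting the decomposition $\bigotimes_{u\in V}\Pi_u \cong \Pi_{\mathcal{G}}$ recorded above. First I would unpack what a partition does at a single vertex. Since the cycles $C_1,\dots,C_r$ cover $V(\overrightarrow{L}\mathcal{G})$ with all in- and out-degrees equal to $1$, the assignment $i\mapsto f_\pi(i,j)$ is a bijection of $N(j)$ onto itself for each fixed $j$; call it $\sigma_j^\pi$. Geometrically $\sigma_j^\pi$ is exactly the in-arc-to-out-arc pairing at $j$ prescribed by $\pi$, i.e.\ the element of $\Pi_j$ that $\pi$ picks out under the isomorphism above.

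Next I would write down the action of the G-type walk on a basis vector in terms of this data. Applying $S_\pi$ and then $C$ gives
\[ U^{(G)}_\pi[H_j:j\in V]\,|i,j\rangle = \sum_{k\in N(j)} \langle \bs{e}_k^{(j)}|H_j|\bs{e}_{\sigma_j^\pi(i)}^{(j)}\rangle\,|j,k\rangle, \]
so the only place where $\pi$ enters is through the index $\sigma_j^\pi(i)$ in the input slot of the coin. Comparing $U^{(G)}_{\pi'}[H_j:j]$ with $U^{(G)}_{\pi}[\widetilde H_j:j]$ thus amounts, for each vertex $j$, to requiring
\[ \widetilde H_j\,|\bs{e}_{\sigma_j^\pi(i)}^{(j)}\rangle = H_j\,|\bs{e}_{\sigma_j^{\pi'}(i)}^{(j)}\rangle \quad\text{for all } i\in N(j). \]

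Then I would simply produce $\widetilde H_j$ of the prescribed form. Let $\mathcal{P}^{(j)}_{\pi,\pi'}$ be the permutation of the basis $\{|\bs{e}_k^{(j)}\rangle\}$ implementing $\tau_j := \sigma_j^{\pi'}\circ(\sigma_j^\pi)^{-1}$, that is, $\mathcal{P}^{(j)}_{\pi,\pi'}|\bs{e}_m^{(j)}\rangle = |\bs{e}_{\tau_j(m)}^{(j)}\rangle$; this is well defined because $\sigma_j^\pi$ and $\sigma_j^{\pi'}$ are both permutations of $N(j)$. With $\widetilde H_j = H_j\mathcal{P}^{(j)}_{\pi,\pi'}$ one gets $\widetilde H_j|\bs{e}_{\sigma_j^\pi(i)}^{(j)}\rangle = H_j|\bs{e}_{\tau_j(\sigma_j^\pi(i))}^{(j)}\rangle = H_j|\bs{e}_{\sigma_j^{\pi'}(i)}^{(j)}\rangle$, which is exactly the required equality, so the two walks agree on every canonical basis vector and hence coincide.

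The computations are routine; the only point demanding care is the index bookkeeping — confirming that right-multiplication by $\mathcal{P}^{(j)}_{\pi,\pi'}$ permutes the \emph{input} slot of $H_j$ (so that it can cancel the change $\sigma_j^\pi\to\sigma_j^{\pi'}$ coming from the shift), and fixing the orientation of $\tau_j$ as $\sigma_j^{\pi'}\circ(\sigma_j^\pi)^{-1}$ rather than its inverse. I expect the main conceptual obstacle to be nothing more than cleanly isolating $\sigma_j^\pi$ from $f_\pi$ and verifying it is a genuine permutation of $N(j)$; once that is in place the rest is forced.
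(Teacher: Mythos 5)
Your proof is correct and takes essentially the same route as the paper: your $\tau_j=\sigma_j^{\pi'}\circ(\sigma_j^\pi)^{-1}$ is exactly the paper's permutation $\mathcal{P}^{(j)}_{\pi,\pi'}=\sum_{i\in N(j)}|\bs{e}^{(j)}_{f_{\pi'}(i,j)}\rangle\langle\bs{e}^{(j)}_{f_\pi(i,j)}|$, and absorbing it into the coin by right-multiplication is the same mechanism the paper expresses via the operator identity $\bigl(\sum_{j}\oplus\,\mathcal{P}^{(j)}_{\pi,\pi'}\bigr)S_\pi=S_{\pi'}$. The only cosmetic difference is that you verify the equality on each canonical basis vector $|i,j\rangle$, whereas the paper argues at the level of the operators $C$, $S_\pi$, $S_{\pi'}$.
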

\begin{proof}
Note that for any $j\in V(\mathcal{G})$, and $\pi,\pi'\in \Pi_{\mathcal{G}}$, 
\[N(j)=\{f_\pi(i,j);i\in N(j)\}=\{f_{\pi'}(i,j);i\in N(j)\}. \] 
Then we can define a permutation on $N(j)$ such that $\sigma_{\pi,\pi'}^{(j)}: f_\pi(i,j)\mapsto f_{\pi'}(i,j)$. Denote $\mathcal{P}_{\pi,\pi'}^{(j)}$ as 
the matrix representation of $\sigma_{\pi,\pi'}^{(j)}$ on $\mathcal{H}_j$, such that 
\begin{equation}\label{pipi'}
\mathcal{P}_{\pi,\pi'}^{(j)}=\sum_{i\in N(j)}| \bs{e}_{f_{\pi'}(i,j)}^{(j)} \rangle\langle \bs{e}_{f_\pi(i,j)}^{(j)} | \cong \sum_{i\in N(j)}| j,f_{\pi'}(i,j) \rangle\langle j,f_\pi(i,j) |.
\end{equation}
The permutation operator $\mathcal{P}_{\pi,\pi'}^{(j)}$ locally changes a partition $\pi\in \Pi_{\mathcal{G}}$ to another partition $\pi'\in \Pi_{\mathcal{G}}$ at vertex $j$. 
Combining Eq.~(\ref{pipi'}) with $S_\pi=\sum_{(i,j)}|j,f_\pi(i,j)\rangle\langle i,j|$ implies 
\[\sum_{j\in V(\mathcal{G})}\oplus \mathcal{P}_{\pi,\pi'}^{(j)}S_\pi=S_{\pi'}.\] 
So we have 
\begin{align}
U_{\pi'}^{(G)}[H_j: j\in V(\mathcal{G})] &= C S_{\pi'}=\sum_{j\in V(\mathcal{G})}\oplus H_j \cdot \sum_{i\in V(\mathcal{G})}\oplus \mathcal{P}_{\pi,\pi'}^{(i)}S_\pi \\
 &= \left(\sum_{j\in V(\mathcal{G})}\oplus H_j\mathcal{P}_{\pi,\pi'}^{(j)}\right)\cdot S_\pi \\
 &= U_{\pi}^{(G)}[\widetilde{H}_j: j\in V(\mathcal{G})], 
\end{align}
where $\widetilde{H}_j=H_j\mathcal{P}_{\pi,\pi'}^{(j)}$. It completes the proof. 
\end{proof}
\begin{theorem}\label{thm1}
Every G-type QW can be expressed by an A-type QW with flip flop shift $\pi_{ff}$ in the following meaning: 
for every $\pi\in\Pi_{\mathcal{G}}$, and a sequence of local quantum coins $\{H_j\}_{j=1}^{|V|}$, 
\begin{equation}\label{AtoG}
U_\pi^{(G)}[H_j:j\in V(\mathcal{G})]={U_{\pi_{ff}}^{(A)}}^{\dagger}[\widetilde{H}_j^\dagger: j\in V(\mathcal{G})], 
\end{equation}
where $\widetilde{H}_j=H_j\mathcal{P}_{\pi_{ff},\pi}^{(j)}$.
\end{theorem}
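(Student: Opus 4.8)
The plan is to obtain the identity by chaining the two preceding lemmas, so that in two substitutions an arbitrary G-type walk is first forced onto the flip flop shift and then turned into an A-type walk. No new computation beyond re-labelling of coins is required, because Lemmas~\ref{time_reverse} and~\ref{change_pi} already package exactly the two conversions I need.

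First I would use Lemma~\ref{change_pi} to remove the dependence on the general partition $\pi$, replacing it by the flip flop partition $\pi_{ff}$ at the cost of dressing each local coin with a permutation. Applying that lemma with its primed partition taken to be the theorem's $\pi$ and its unprimed partition taken to be $\pi_{ff}$ gives
\[ U_\pi^{(G)}[H_j: j\in V(\mathcal{G})] = U_{\pi_{ff}}^{(G)}[\widetilde{H}_j: j\in V(\mathcal{G})],\qquad \widetilde{H}_j = H_j\,\mathcal{P}^{(j)}_{\pi_{ff},\pi}. \]
This already produces precisely the dressed coins $\widetilde{H}_j$ appearing in the statement. Since $\mathcal{P}^{(j)}_{\pi_{ff},\pi}$ is a permutation matrix, hence unitary, and each $H_j$ is unitary, every $\widetilde{H}_j$ is again a legitimate local quantum coin.

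Next I would convert the G-type walk with flip flop shift into an A-type walk with the same shift by invoking Lemma~\ref{time_reverse}. Its flip flop case with $J=A$ reads ${U^{(A)}_{\pi_{ff}}[K_j]}^{-1} = U^{(G)}_{\pi_{ff}}[K_j^{-1}]$; choosing $K_j=\widetilde{H}_j^{-1}$, so that $K_j^{-1}=\widetilde{H}_j$, yields
\[ U_{\pi_{ff}}^{(G)}[\widetilde{H}_j] = {U^{(A)}_{\pi_{ff}}[\widetilde{H}_j^{-1}]}^{-1}. \]
Finally I would use unitarity of the time evolution to identify the inverse with the adjoint, writing ${U^{(A)}_{\pi_{ff}}[\,\cdot\,]}^{-1} = {U^{(A)}_{\pi_{ff}}[\,\cdot\,]}^{\dagger}$ and $\widetilde{H}_j^{-1}=\widetilde{H}_j^{\dagger}$. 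Composing the two displayed identities then gives
\[ U_\pi^{(G)}[H_j:j\in V(\mathcal{G})] = {U^{(A)}_{\pi_{ff}}}^{\dagger}[\widetilde{H}_j^{\dagger}:j\in V(\mathcal{G})], \]
which is exactly the asserted equation.

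Since each step is a direct substitution into an already-established lemma, I do not expect a genuine obstacle. The only points demanding care are bookkeeping ones: keeping the primed/unprimed roles in Lemma~\ref{change_pi} straight so that the permutation emerges as $\mathcal{P}^{(j)}_{\pi_{ff},\pi}$ rather than its inverse, and making explicit that $\widetilde{H}_j^{-1}=\widetilde{H}_j^{\dagger}$ precisely because $\widetilde{H}_j$ is unitary, which is what licenses passing from the inverse in Lemma~\ref{time_reverse} to the adjoint in the final statement.
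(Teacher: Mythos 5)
Your proposal is correct and is essentially identical to the paper's proof: the paper likewise proves Theorem~\ref{thm1} by chaining Lemma~\ref{change_pi} (to trade the partition $\pi$ for $\pi_{ff}$ via the dressed coins $\widetilde{H}_j=H_j\mathcal{P}^{(j)}_{\pi_{ff},\pi}$) with Lemma~\ref{time_reverse} (to convert the flip-flop G-type walk into the adjoint of an A-type walk with coins $\widetilde{H}_j^\dagger$). Your added bookkeeping on the primed/unprimed roles in Lemma~\ref{change_pi} and on unitarity justifying $\widetilde{H}_j^{-1}=\widetilde{H}_j^\dagger$ is exactly the detail the paper leaves implicit.
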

\begin{proof}
Combining Lemma \ref{time_reverse} with \ref{change_pi}, we arrive at
\begin{align}
U_{\pi}^{(G)}[H_j: j\in V(\mathcal{G})]=U_{\pi_{ff}}^{(G)}[\widetilde{H}_j: j\in V(\mathcal{G})]={U_{\pi_{ff}}^{(A)}}^{\dagger}[\widetilde{H}_j^\dagger: j\in V(\mathcal{G})]. 
\end{align}
\end{proof}
\begin{corollary}\label{AtoA}
For every $\pi\in \Pi_{\mathcal{G}}$, Ambainis type QW with $\pi$ and a sequence local quantum coins $\{H_j\}_{j=1}^{|V|}$, 
can be also expressed by an Ambainis type QW with the flip flop shift $\pi_{ff}$ as follows: 
\begin{equation}\label{AtoA1}
U_\pi^{(A)}[H_j:j\in V(\mathcal{G})] =S_\pi{U_{\pi_{ff}}^{(A)}}^{\dagger}[\widetilde{H}_j^\dagger:j\in V(\mathcal{G})]S_\pi^\dagger,
\end{equation}
where $\widetilde{H}_j=H_j\mathcal{P}_{\pi_{ff},\pi}^{(j)}$.
\end{corollary}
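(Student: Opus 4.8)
The plan is to reduce the A-type-to-A-type identity to the already-established G-type-to-A-type reduction of Theorem \ref{thm1}, by inserting the shift conjugation that turns a G-type walk into the A-type walk sharing the same partition and the same coins. In other words, I would treat the corollary as the composition of two facts proved above: the $\tfrac12$-time-gap conjugation (Lemma \ref{dual}) and the expression of a G-type walk as a flip-flop A-type walk (Theorem \ref{thm1}).

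First I would record the elementary conjugation between the two time orderings at fixed $\pi$ and fixed coins. Since $U_\pi^{(A)}=S_\pi C$ and $U_\pi^{(G)}=CS_\pi$, and $S_\pi$ is a permutation, hence unitary with $S_\pi S_\pi^\dagger=I$, a direct computation gives
\[ S_\pi\, U_\pi^{(G)}[H_j:j\in V]\, S_\pi^\dagger = S_\pi (CS_\pi) S_\pi^\dagger = S_\pi C = U_\pi^{(A)}[H_j:j\in V]. \]
This is exactly the $n=1$ instance of Lemma \ref{dual}, rearranged by multiplying on the left by $S_\pi$ and on the right by $S_\pi^\dagger$, so I may quote that lemma directly rather than re-deriving it.

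Next I would invoke Theorem \ref{thm1} to rewrite the inner G-type factor. That theorem yields
\[ U_\pi^{(G)}[H_j:j\in V] = {U_{\pi_{ff}}^{(A)}}^{\dagger}[\widetilde{H}_j^\dagger:j\in V], \qquad \widetilde{H}_j=H_j\mathcal{P}_{\pi_{ff},\pi}^{(j)}. \]
Substituting this into the conjugation relation from the first step produces
\[ U_\pi^{(A)}[H_j:j\in V] = S_\pi\, {U_{\pi_{ff}}^{(A)}}^{\dagger}[\widetilde{H}_j^\dagger:j\in V]\, S_\pi^\dagger, \]
which is precisely the claimed identity with the stated coin relabeling.

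The argument is therefore a two-line composition once the two ingredients are in place, so I do not expect a genuine obstacle beyond bookkeeping. The only point requiring care is to confirm that the coin relabeling $\widetilde{H}_j=H_j\mathcal{P}_{\pi_{ff},\pi}^{(j)}$ supplied by Theorem \ref{thm1} carries over verbatim, with no extra permutation introduced by the outer conjugation by $S_\pi$. Because that conjugation acts on the full operator and leaves the coin register untouched, the coins are unaffected and the final formula matches the statement exactly; I would flag this as the one verification worth making explicit.
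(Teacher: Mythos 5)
Your proof is correct and follows essentially the same route as the paper: conjugate $U_\pi^{(A)}=S_\pi U_\pi^{(G)}S_\pi^\dagger$ via Lemma~\ref{dual} and then rewrite the inner G-type factor as ${U_{\pi_{ff}}^{(A)}}^{\dagger}[\widetilde{H}_j^\dagger]$, the only cosmetic difference being that you quote Theorem~\ref{thm1} as a black box where the paper spells out its two ingredients (Lemma~\ref{change_pi} to change the partition, then the time-reversal step) inside the conjugation.
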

\begin{proof}
Lemmas \ref{dual} and \ref{change_pi} and Theorem \ref{thm1} imply that 
\begin{align}
U_\pi^{(A)}[H_j:j\in V(\mathcal{G})] &= S_\pi{U_{\pi}^{(G)}}[H_j:j\in V(\mathcal{G})]S_\pi^\dagger \\
	&= S_\pi{U_{\pi_{ff}}^{(G)}}[\widetilde{H}_j:j\in V(\mathcal{G})]S_\pi^\dagger \\
        &= S_\pi{U_{\pi_{ff}}^{(A)}}^{\dagger}[\widetilde{H}_j^\dagger:j\in V(\mathcal{G})]S_\pi^\dagger, 
\end{align}
which completes the proof. 
\end{proof}
%
For matrices $M, M'$, if there exists a permutation matrix $P$ such that $M'=P^\dagger MP$, we call $M$ is isomorphic to $M'$. 
\begin{corollary}(Severini \cite{Severini})
Every time evolution of coined QW is a weighted adjacency matrix of $\overrightarrow{L}\mathcal{G}$ or isomorphic to its transposed one. 
\end{corollary}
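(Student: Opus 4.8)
The plan is to read the classification directly off the matrix entries recorded in Eq.~(\ref{Gm}) and in the A-type formula, and then to transport the result between the two types by the permutation $S_\pi$. The mathematical content is light; the work is in fixing conventions so that the stated dichotomy comes out cleanly.

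First I would treat the G-type walk for an arbitrary partition $\pi$. In Eq.~(\ref{Gm}) the factor $\delta_{j,l}$ forces $\langle l,m|U^{(G)}|i,j\rangle$ to vanish unless $l=j$; hence a nonzero entry links the column arc $(i,j)$ to a row arc of the form $(j,m)$ with $m\in N(j)$. Since the terminus of $(i,j)$ equals the origin of $(j,m)$, the pair $\big((i,j),(j,m)\big)$ is exactly an arc of $\overrightarrow{L}\mathcal{G}$. Thus the support of $U^{(G)}$ coincides (up to the transpose dictated by the row/column convention) with the arc set of $\overrightarrow{L}\mathcal{G}$, so $U^{(G)}_\pi$ is a weighted adjacency matrix of $\overrightarrow{L}\mathcal{G}$, and this already disposes of every G-type walk for all $\pi$.

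Next I would pass to the A-type walks. By Lemma~\ref{dual} with $n=1$ we have $U^{(A)}_\pi=S_\pi U^{(G)}_\pi S_\pi^\dagger$, and $S_\pi$ is a genuine permutation matrix because $(i,j)\mapsto(j,f_\pi(i,j))$ is the successor bijection of the disjoint Euler circles $\{C_k\}$. Taking $P=S_\pi^\dagger$ in the paper's notion of isomorphism gives $U^{(A)}_\pi=P^\dagger U^{(G)}_\pi P$, so $U^{(G)}_\pi$ and $U^{(A)}_\pi$ are isomorphic. To land precisely on the transpose alternative one can invoke Theorem~\ref{thm1}: $U^{(G)}_\pi$ is the adjoint of the flip-flop A-type matrix ${U^{(A)}_{\pi_{ff}}}^{\dagger}$, whose support — by the computation $U^{(A)}_{\pi_{ff}}\colon(i,j)\mapsto(k,i)$ for $k\in N(i)$, using $f_{\pi_{ff}}(i,k)=i$ — is the arc set of $\overrightarrow{L}\mathcal{G}$ itself. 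Conjugating this adjoint identity by $S_\pi$ via Corollary~\ref{AtoA} then exhibits the general A-type walk as isomorphic to the transpose of a weighted adjacency matrix of $\overrightarrow{L}\mathcal{G}$. The two alternatives in the statement are thus exactly the G-type and A-type cases.

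The only genuinely delicate point is the transpose bookkeeping and the direction of the isomorphism. I must fix once and for all the convention $B_{ab}\neq0\Leftrightarrow(a,b)\in A(\overrightarrow{L}\mathcal{G})$, check that the adjoint ${}^{\dagger}$ turns this pattern into its transpose (the accompanying complex conjugation of the weights is harmless, since arbitrary complex weights are permitted), and verify that $S_\pi$ and the coin-permutations $\mathcal{P}^{(j)}_{\pi,\pi'}$ are honest permutation matrices so that conjugation by them is an isomorphism in the paper's sense. A clean way to absorb the remaining ambiguity is to observe that the flip-flop permutation $S_{\pi_{ff}}$ realizes an isomorphism between $\overrightarrow{L}\mathcal{G}$ and its converse (the arc reversal $(u,v)\mapsto(v,u)$), so that ``weighted adjacency matrix of $\overrightarrow{L}\mathcal{G}$'' and ``isomorphic to the transpose of one'' are interchangeable and the stated ``or'' is harmless.
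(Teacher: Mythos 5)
Your proposal is correct and takes essentially the same route as the paper: G-type walks are read off Eq.~(\ref{Gm}) as weighted adjacency matrices of $\overrightarrow{L}\mathcal{G}$, flip-flop A-type walks are identified with the transposed pattern, and general A-type walks are handled by conjugation with the permutation $S_\pi$ via Lemma~\ref{dual} and Corollary~\ref{AtoA}, exactly as in the paper's proof. Your closing remark that $S_{\pi_{ff}}$ realizes an isomorphism from $\overrightarrow{L}\mathcal{G}$ to its converse, so that the two branches of the dichotomy are interchangeable up to isomorphism, is if anything more careful than the paper's own transpose bookkeeping.
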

\begin{proof}
The adjacency matrix of $\overrightarrow{L}\mathcal{G}$ is 
\begin{equation}\label{Ad} \langle l,m|M(\overrightarrow{L}\mathcal{G})|i,j\rangle = \delta_{j,l}. \end{equation}
Comparing the Eq.~(\ref{Ad}) with Eq.~(\ref{Gm}), obviously, 
G-type QW is a weighted adjacency matrix of $\overrightarrow{L}\mathcal{G}$. 
Putting $J_m$ be $m$-dimensional all one matrix, we have for every $\pi\in \Pi_\mathcal{G}$, 
\[ M(\overrightarrow{L}\mathcal{G}) = \left(\sum_{j\in V}\oplus J_{d_j}\right) S_\pi. \] 
Therefore, for every $\pi\in \Pi_\mathcal{G}$, by the statement of proof for Theorem \ref{thm1}, 
\begin{align}
M(\overrightarrow{L}\mathcal{G})^\dagger &= \left\{\left(\sum_{j\in V}\oplus J_{d_j}\right) S_\pi\right\}^\dagger
	=\left\{\left(\sum_{j\in V}\oplus (J_{d_j}\mathcal{P}_{\pi,\pi_{ff}}^{(j)})\right) S_{\pi_{ff}}\right\}^\dagger \\
        &=\left\{\left(\sum_{j\in V}\oplus J_{d_j}\right) S_{\pi_{ff}}\right\}^\dagger \\
        &= S_{\pi_{ff}}\left(\sum_{j\in V}\oplus J_{d_j}\right), 
\end{align}
which implies that A-type QW with flip flop partition is a transposed weighted adjacency matrix of $\overrightarrow{L}\mathcal{G}$. 
Moreover from Corollary \ref{AtoA}, obviously, we see that A-type QW with partition $\pi\in \Pi_\mathcal{G}$ is isomorphic 
to a transposed weighted adjacency matrix of the line digraph of $\mathcal{G}$ 
with respect to the permutation matrix $S^\dagger _\pi$. 
So we obtain the desired conclusion. 
\end{proof}
For a fixed coin operator $C$, then 
once we get an information on the A-type QW with {\it flip flop shift}, we can immediately interpret it to any other corresponding coined quantum walk 
because of Eq.~(\ref{AtoG}) in Theorem \ref{thm1} and Eq.~(\ref{AtoA1}) in Corollary \ref{AtoA}. 
Thus from now on, we treat only A-type QWs with flip flop shift. 
Note that {\it all A-type QWs with flip flop shift on graph $\mathcal{G}$ are determined by only the choice of local quantum coins $H_j$'s} ($j\in V(\mathcal{G})$). 
In the following, we will show two special choices of the local quantum coins called ``Szegedy walk" and ``quantum graph walk". 
\section{Szegedy walk}\label{sec:3}
In this section we briefly review on the Szegedy walk. 
The original walk introduced by Szegedy himself is the double steps of the Szegedy walk treated here. 
The Szegedy walk comes from a probability transition matrix $(\bs{P})_{u,v\in V(\mathcal{G})}$ on graph $\mathcal{G}$. 
Put $(\bs{P})_{u,v}=p_{u,v}$ which is the probability that a particle on vertex $u$ jumps to the neighbor $v$ at each time step with 
$\sum_{y\in N(u)}p_{u,v}=1, 0\leq p_{u,v}\leq 1$. 
\begin{definition} (Szegedy walk)
We call {\it Szegedy walk} to the A-type QW with flip flop shift $U^{(\bf{P})}_{\pi_{ff}}[H_j;j\in V]$, where the $d_j$-dimensional unitary local quantum coin at vertex $j$ is
for any $l,m\in N(j)$, 
\begin{equation}\label{SzeCoin}
\langle \bs{e}_m^{(j)}|H_j|\bs{e}_l^{(j)}\rangle=2\sqrt{p_{j,l}p_{j,m}}-\delta_{lm}.
\end{equation}
\end{definition}
\noindent Put $A: \ell^2(V)\to \ell^2(D)$ such that for a canonical base $|j\rangle$ $(j\in V)$, $A|j\rangle= \sum_{l\in N(j)}\sqrt{p_{j,l}}|j,l\rangle$. 
In particular, we choose $\bs{P}$ so that $p_{i,j}=1/d_i$ for all $i\in V$, the Szegedy walk becomes the Grover walk which is intensively investigated in the 
view point of quantum information. 
Let the symmetric matrix $J\in M_{|V|}(\mathcal{C}^2)$ be $(J)_{ij}=\sqrt{p_{ij}p_{ji}}$. 
In the Grover walk case, $J=\bs{P}$. 
Then we can obtain the eigensystem of $U^{(\bs{P})}$ by using the eigensystem of $J$ as follows. 
In this paper, we refine the original theorem by Szegedy~\cite{Szegedy}. (We can see for a detailed proof in \cite{Segawa} for example.) 
\begin{theorem}\label{szegedywalk}
Let $\nu=\cos\theta_\nu$ with $\mathrm{sgn}(\sin\theta_\nu)=\mathrm{sgn}(\nu)$. 
Then we have 
\begin{equation}
\mathrm{spec}(U^{(\bs{P})}) = 
\begin{cases}
	\{ e^{\im \theta_\nu}; \nu\in\mathrm{spec}(J) \}\cup \{ e^{-\im \theta_\nu}; \nu\in\mathrm{spec}(J) \setminus{\{\pm 1\}} \} & \text{; $|E|=|V|-1$}, \\
	\{ e^{\im \theta_\nu}; \nu\in\mathrm{spec}(J) \}\cup \{ e^{-\im \theta_\nu}; \nu\in\mathrm{spec}(J) \}& \text{; $|E|=|V|$}, \\
	\{ e^{\im \theta_\nu}; \nu\in\mathrm{spec}(J) \}\cup \{ e^{-\im \theta_\nu}; \nu\in\mathrm{spec}(J) \}\cup \{\overbrace{1}^{|E|-|V|},\overbrace{-1}^{|E|-|V|} \} & \text{; otherwise}. 
\end{cases}
\end{equation}
Let $\bs{\mathfrak{p}}_\nu$ the eigenvector of eigenvalue $\nu$ for $J$. 
The eigenvectors for 
\[ 
e^{\im \theta_\nu} \mathrm{\;with\;} \nu\in \mathrm{spec}(J) \mathrm{\;and\;} 
e^{-\im \theta_\nu} \mathrm{\;with\;} \nu\in \mathrm{spec}(J)\setminus\{\overbrace{1}^{m(1)}, \overbrace{-1}^{m(-1)}\} 
\]
are expressed by 
\begin{equation}
        (I-e^{i\theta_\nu} S)A\bs{\mathfrak{p}}_\nu \mathrm{\;and\;} (I-e^{-i\theta_\nu}S)A\bs{\mathfrak{p}}_\nu, 
\end{equation}
respectively, where $m(\pm 1)$ are the multiplicities of eigenvalues $\pm 1$ of $J$. 
\end{theorem}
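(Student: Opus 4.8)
The plan is to recognize the Szegedy coin as a reflection and reduce everything to the single self-adjoint ``discriminant'' $J$. First I would rewrite the coin: setting $|\phi_j\rangle=\sum_{l\in N(j)}\sqrt{p_{j,l}}\,|\bs{e}^{(j)}_l\rangle$, the defining relation (\ref{SzeCoin}) says precisely $H_j=2|\phi_j\rangle\langle\phi_j|-I_{d_j}$, so the coin flip operator is the reflection $C=2\Pi-I$ with $\Pi=\bigoplus_j|\phi_j\rangle\langle\phi_j|$. Since $A|j\rangle=|\phi_j\rangle$ and the $|\phi_j\rangle$ live in mutually orthogonal subspaces $\mathcal{H}_j$, a direct computation gives $A^\dagger A=I_{\ell^2(V)}$ (so $A$ is an isometry), $\Pi=AA^\dagger$ (the projection onto $\mathcal{L}:=\mathrm{Range}(A)$), and---using that the flip flop shift obeys $S=S^\dagger$, $S^2=I$---the crucial discriminant identity $A^\dagger S A=J$. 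With these in hand $U^{(\bs{P})}=S(2\Pi-I)$ is a product of two reflections, and the whole spectral question is governed by the relative position of $\mathcal{L}$ and $S\mathcal{L}$.

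Next I would verify the claimed eigenvectors directly. Fix $\bs{\mathfrak{p}}_\nu$ with $J\bs{\mathfrak{p}}_\nu=\nu\bs{\mathfrak{p}}_\nu$ and write $v=A\bs{\mathfrak{p}}_\nu$. Because $v\in\mathcal{L}$ we have $\Pi v=v$, hence $Cv=v$, while $\Pi S v=A A^\dagger S v=A J\bs{\mathfrak{p}}_\nu=\nu v$. A one line computation then shows that $w_\pm=(I-e^{\pm\im\theta_\nu}S)v$ satisfies $U^{(\bs{P})}w_\pm=e^{\pm\im\theta_\nu}w_\pm$, the only algebraic input being the identity $1-2e^{\pm\im\theta_\nu}\cos\theta_\nu=-e^{\pm2\im\theta_\nu}$. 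For $\nu\neq\pm1$ one checks $w_\pm\neq0$ (otherwise applying $\Pi$ to $v=e^{\pm\im\theta_\nu}Sv$ forces $e^{\pm\im\theta_\nu}\nu=1$, i.e. $\nu=\pm1$), which is exactly why the $e^{-\im\theta_\nu}$ family must exclude $\nu\in\{\pm1\}$. For $\nu=\pm1$ I would instead show $Sv=\pm v$ (both are unit vectors with $\langle v,Sv\rangle=\langle\bs{\mathfrak{p}}_\nu,J\bs{\mathfrak{p}}_\nu\rangle=\nu=\pm1$), so $v$ itself is an eigenvector of $U^{(\bs{P})}$ for eigenvalue $\pm1$.

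It then remains to prove completeness and to read off the multiplicities. I would introduce $\mathcal{K}=\mathcal{L}+S\mathcal{L}$, check $U^{(\bs{P})}\mathcal{K}\subseteq\mathcal{K}$ and hence $U^{(\bs{P})}\mathcal{K}^\perp\subseteq\mathcal{K}^\perp$, and observe that on $\mathcal{K}^\perp=\mathcal{L}^\perp\cap(S\mathcal{L})^\perp$ one has $\Pi x=\Pi Sx=0$, so $U^{(\bs{P})}|_{\mathcal{K}^\perp}=-S$ contributes only eigenvalues $\pm1$. The eigenvectors of the previous paragraph span $\mathcal{K}$: using $\|J\|\le1$ and the fact that $\|J\bs{p}\|=\|\bs{p}\|$ holds only on the $\pm1$-eigenspaces, I would compute $\dim(\mathcal{L}\cap S\mathcal{L})=m(1)+m(-1)$, hence $\dim\mathcal{K}=2|V|-m(1)-m(-1)$, matching the count of the two families exactly. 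The surviving multiplicities of $\pm1$ I would extract by splitting $\mathcal{K}^\perp$ into its $S=\pm1$ (symmetric/antisymmetric) eigenspaces, whose dimensions are $|E|-|V|+m(\mp1)$; since $U^{(\bs{P})}=-S$ there, combining with the $m(\pm1)$ eigenvalues already produced inside $\mathcal{K}$ gives total multiplicities $2m(\pm1)+(|E|-|V|)$ for $\pm1$, and the full count adds up to $2|E|$.

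The main obstacle is precisely this last bookkeeping: translating the dimension counts into the three displayed cases. The split is driven by the topology of $\mathcal{G}$ through $|E|-|V|$ together with bipartiteness (which controls whether $-1\in\mathrm{spec}(J)$), and one must be careful that the statement folds some of the $\mathcal{K}^\perp$-eigenvalues $\pm1$ into the ``inherited'' symbols $e^{\pm\im\theta_\nu}$ at $\nu=\pm1$, writing only the surplus $\{1^{|E|-|V|},(-1)^{|E|-|V|}\}$ explicitly. Verifying, for instance, that for a tree (where $m(1)=m(-1)=1$ and $|E|=|V|-1$) the counts $2m(\pm1)+(|E|-|V|)=1$ reproduce the first line with no extra $\pm1$, while the unicyclic case $|E|=|V|$ reproduces the second line, is the genuinely delicate step and the one I would spell out most carefully.
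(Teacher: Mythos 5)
Your proposal is correct, and it actually supplies something the paper itself omits: the paper states Theorem~\ref{szegedywalk} without proof, remarking only that it refines Szegedy's original result and pointing to \cite{Szegedy,Segawa} for details. Your route is the standard spectral-mapping argument underlying those references: write the coin as the reflection $C=2AA^\dagger-I$, establish the discriminant identity $A^\dagger S A=J$ (using $A^\dagger A=I$, $S=S^\dagger$, $S^2=I$), verify directly that $w_\pm=(I-e^{\pm\im\theta_\nu}S)A\bs{\mathfrak{p}}_\nu$ are eigenvectors, and then prove completeness on $\mathcal{K}=\mathcal{L}+S\mathcal{L}$ and its orthogonal complement, where $U^{(\bs{P})}$ acts as $-S$. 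Every key identity you invoke checks out: $Uw_\pm=e^{\pm\im\theta_\nu}w_\pm$ reduces to $1-2e^{\pm\im\theta}\cos\theta=-e^{\pm2\im\theta}$; $\dim(\mathcal{L}\cap S\mathcal{L})=m(1)+m(-1)$ follows from the equality case of $\|J\bs{p}\|\le\|\bs{p}\|$; the $S$-eigenspace dimensions of $\mathcal{K}^\perp$ come out as $|E|-|V|+m(\mp1)$ via $\mathcal{K}\cap\ker(S\mp I)=(I\pm S)\mathcal{L}$, whose dimension is $|V|-m(\mp1)$; and the resulting total multiplicity $2m(\pm1)+(|E|-|V|)$ for the eigenvalues $\pm1$ reproduces all three displayed cases (for the tree case one uses that trees are bipartite and every chain on a tree is reversible, so $m(1)=m(-1)=1$ and the count $2m(\pm1)-1=1$ matches the first line). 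One place where you are in fact more careful than the statement being proved: for $\nu=\pm1$ one has $SA\bs{\mathfrak{p}}_{\pm1}=\pm A\bs{\mathfrak{p}}_{\pm1}$, so the displayed formula $(I-e^{\im\theta_\nu}S)A\bs{\mathfrak{p}}_\nu$ degenerates to the zero vector there; your replacement of it by $A\bs{\mathfrak{p}}_{\pm1}$ itself is the correct reading, and is needed to make the eigenvector count $2|V|-m(1)-m(-1)$ on $\mathcal{K}$ come out. The only steps left implicit --- linear independence of the $w_\pm$ within a fixed eigenspace, and that distinct $\nu$ give distinct eigenvalues under the sign convention $\mathrm{sgn}(\sin\theta_\nu)=\mathrm{sgn}(\nu)$ --- are routine and follow from the same nonvanishing argument you give.
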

%
\section{Quantum graph walk}\label{sec:4}
\subsection{Quantum graphs}
This formulation of the quantum graph is according to Smilansky and his group \cite{Smilansky}. 
In the quantum graph, a metric graph of $\mathcal{G}(V,E)$, whose each edge $e \in E(\mathcal{G})$ is assigned a length $L_{e}\in [0,\infty)$, is given.  
Let us denote the vertex set $V(\mathcal{G})$ which has an order such that $V=\{1,2,\dots, |V|\}$. 
To describe position on edge $e=\{i,j\}$ of the metric graph $\mathcal{G}(V,E)$, we define $x\in [0,L_e]$ by the distance from $\mathrm{min}\{i,j\}$. 

At each edge $\{i,j\}\in E(\mathcal{G})$, the quantum graph gives the wave function $\Psi_{\{i,j\}}(x)$ in the location of $x\in [0,L_{\{i,j\}}]$ 
determined by the following Schr\"odinger equation: 
\begin{equation}\label{original}
\left( -\im \frac{d}{dx}+A_{\{i,j\}} \right)^2 \Psi_{\{i,j\}}(x)=k^2 \Psi_{\{i,j\}}(x). 
\end{equation}
Moreover the wave function is imposed the following two boundary conditions: 
\begin{enumerate}
\item\label{Continuty} Continuity \\
For every $i\in V(\mathcal{G})$, there exists a $\phi_i\in \mathbb{C}$, such that
\begin{align}
\Psi_{\{i,j\}}(0) &= \phi_i\mathrm{\;\;for\;any\;} j\in N(i) \mathrm{\;with\;} j>i,  \\
\Psi_{\{i,k\}}(L_{\{i,k\}}) &= \phi_i\mathrm{\;\;for\;any\;} k\in N(i) \mathrm{\;with\;} k<i.
\end{align}
where $N(i)=\{j\in V(\mathcal{G}): \{i,j\}\in E(\mathcal{G})\}$. 
\item \label{Current conservation}Current conservation\\ For $\lambda_i\geq 0$, 
\begin{multline}
	\sum_{j:j<i}\left( -\frac{d}{dx}-\im A_{\{ij\}} \right)\Psi_{\{ij\}}(x)\bigg|_{x=L_{ij}}
	+\sum_{j:j>i}\left( -\frac{d}{dx}+\im A_{\{ij\}} \right)\Psi_{\{ij\}}(x)\bigg|_{x=0}=\lambda_i\phi_i. 
\end{multline}        
\end{enumerate}
When $\lambda_i=0$, then the condition \ref{Current conservation} is called Neumann boundary condition, while $\lambda_i=\infty$, Dirichlet boundary condition.
Define the following wave function on $\mathcal{D}(\mathcal{G})$: 
\begin{equation}\label{change_arc}
\Psi_{(i,j)}(x) = \begin{cases}  \Psi_{\{ij\}}(x) & \text{: $i<j$,} \\ \Psi_{\{ij\}}(L_{\{ij\}}-x)  & \text{: $i>j$}\end{cases}.
\end{equation}
Let $A_{(ij)}=\mathrm{sgn}(j-i)A_{\{ij\}}$. 
Then we obtain the following lemma which is equivalent to the original Schr\"odinger equation (\ref{original}) with the two boundary conditions (1) and (2), however it is useful for our discussion: 
\begin{lemma}
The Schr\"odinger equations (\ref{original}) with the boundary conditions (1) and (2) are hold for all $\{ij\}\in E$ simultaneously, if and only if 
the following Schr\"odinger equations (\ref{arcSch}) with the boundary conditions (\ref{c1}) - (\ref{c3}) are hold for all $(i,j)\in D(\mathcal{G})$. 
\begin{equation}\label{arcSch}
\left(-\im\frac{d}{dx}+A_{(i,j)}\right)^2\Psi_{(i,j)}(x) =k^2 \Psi_{(i,j)}(x).
\end{equation}
\begin{enumerate}
\renewcommand{\theenumi}{\Roman{enumi}}
\item\label{c1} $\Psi_{(i,j)}(x)=\Psi_{(j,i)}(L_{\{i,j\}}-x)$,
\item\label{c2} $\Psi_{(i,j)}(0)=\phi_i$ for all $j\in N(i)$. 
\item\label{c3} $\sum_{j\in N(i)}\left( -\im d/dx +A_{(i,j)}\right)\Psi_{(i,j)}(x)\big|_{x=0}=-\im\lambda_i\phi_i$ for all $i\in V(\mathcal{G})$. 
\end{enumerate}
\end{lemma}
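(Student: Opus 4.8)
The statement is at heart a \emph{reparametrization} identity: every undirected edge $\{i,j\}$ is covered by the two symmetric arcs $(i,j)$ and $(j,i)$, and the substitution (\ref{change_arc}) sets up a bijection between a single edge function $\Psi_{\{ij\}}$ and a pair of arc functions $\Psi_{(i,j)},\Psi_{(j,i)}$ that are tied together by the reflection $x\mapsto L_{\{ij\}}-x$. The plan is to prove both implications simultaneously by checking, ingredient by ingredient, that under (\ref{change_arc}) each piece of the arc formulation is the faithful transcription of the corresponding piece of the edge formulation. Concretely I would match: the equation (\ref{arcSch}) on the two arcs against the equation (\ref{original}) on the edge; condition (\ref{c1}) against the mere fact that the two arc functions restrict one edge function; condition (\ref{c2}) against continuity (1); and condition (\ref{c3}) against current conservation (2).

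For the differential equation, the arc $(i,j)$ with $i<j$ is trivial: here $A_{(i,j)}=A_{\{ij\}}$ and $\Psi_{(i,j)}=\Psi_{\{ij\}}$, so (\ref{arcSch}) \emph{is} (\ref{original}). For the reversed arc I would substitute $x\mapsto L_{\{ij\}}-x$; since then $d/dx\mapsto -d/dx$ while $A_{(j,i)}=-A_{\{ij\}}$, the first-order covariant operator $-\im\, d/dx+A$ acquires a single overall sign, which disappears upon squaring, so (\ref{arcSch}) on $(j,i)$ is again equivalent to (\ref{original}). Thus (\ref{arcSch}) for all arcs holds iff (\ref{original}) holds for all edges. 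Note that (\ref{c1}) is not an extra constraint but precisely the compatibility making $\Psi_{\{ij\}}$ well defined from the arc data; given (\ref{change_arc}) it is automatic, and conversely it is what lets me reconstruct the edge function.

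The continuity condition (1) is stated in two cases, namely $\Psi_{\{ij\}}(0)=\phi_i$ for $j>i$ and $\Psi_{\{ik\}}(L_{\{ik\}})=\phi_i$ for $k<i$, and under (\ref{change_arc}) both collapse to the single uniform relation $\Psi_{(i,j)}(0)=\phi_i$ for every $j\in N(i)$, which is exactly (\ref{c2}). The one substantive computation is current conservation. I would split the sum over $N(i)$ in (\ref{c3}) into the parts $j>i$ and $j<i$: for $j>i$ the arc term reads off directly at $x=0$, while for $j<i$ I transform the value at $x=0$ into a value at the far endpoint $x=L_{\{ij\}}$ through the reflection, keeping track of the induced sign flips of both $d/dx$ and $A_{(i,j)}$. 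After multiplying (\ref{c3}) by $\im$ (which turns its right-hand side $-\im\lambda_i\phi_i$ into $\lambda_i\phi_i$), the $j>i$ and $j<i$ groups should reproduce exactly the two sums appearing in condition (2).

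I expect the differential-equation equivalence and the identification of (\ref{c2}) with continuity to be routine, so the real content, and the main obstacle, is the sign bookkeeping in the current-conservation step. The delicate point is to verify that the convention $A_{(ij)}=\mathrm{sgn}(j-i)A_{\{ij\}}$ and the reversal of the spatial derivative conspire correctly, so that the single ``outgoing covariant derivative at $x=0$'' appearing in (\ref{c3}) for each incident arc matches the edge-based derivative evaluated at the geometrically correct endpoint ($0$ when $j>i$, $L_{\{ij\}}$ when $j<i$). Getting every sign right there is essentially the whole proof.
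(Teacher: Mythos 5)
Your overall strategy is the right one; in fact the paper states this lemma with no proof at all, treating the arc reformulation as a routine change of variables, so the transcription argument you outline is exactly what would need to be written down. Your handling of the differential equation (the covariant operator $-\im\,d/dx+A$ picks up a single overall sign under $x\mapsto L_{\{ij\}}-x$ combined with $A_{(j,i)}=-A_{(i,j)}$, and that sign cancels upon squaring), of condition (I) as well-definedness of the edge function, and of condition (II) as the two cases of continuity, are all correct.

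The gap is that you never carry out the one computation you yourself call ``essentially the whole proof,'' and when one does carry it out, it does not end the way you assert. Transcribing condition (III) into edge functions and multiplying by $\im$ gives
\begin{equation*}
\sum_{j>i}\left(+\frac{d}{dx}+\im A_{\{ij\}}\right)\Psi_{\{ij\}}(x)\bigg|_{x=0}
+\sum_{j<i}\left(-\frac{d}{dx}-\im A_{\{ij\}}\right)\Psi_{\{ij\}}(x)\bigg|_{x=L_{\{ij\}}}
=\lambda_i\phi_i,
\end{equation*}
i.e.\ the sum of \emph{outgoing} covariant derivatives at vertex $i$: for $j<i$ the reflection flips both $d/dx$ and the potential, producing the $(-d/dx-\im A)$ term at $x=L_{\{ij\}}$, while for $j>i$ nothing is flipped, producing $(+d/dx+\im A)$ at $x=0$. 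The paper's printed condition (2), however, has $\left(-\frac{d}{dx}+\im A_{\{ij\}}\right)\big|_{x=0}$ in the $j>i$ sum --- an \emph{inward} derivative paired with an outward magnetic term. The two statements differ by $2\sum_{j>i}\Psi_{\{ij\}}'(0)$, which is not forced to vanish by continuity or by the Schr\"odinger equation, so your claim that the groups ``reproduce exactly the two sums appearing in condition (2)'' is false as the paper is literally written, and so is the literal equivalence asserted by the lemma. The resolution is that the sign of $d/dx$ in the $j>i$ sum of (2) is a typo: the $j<i$ term of (2) and condition (III) both follow the standard Kottos--Smilansky convention of outgoing covariant derivatives, and (2) should read $(+d/dx+\im A_{\{ij\}})\big|_{x=0}$ there. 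A complete proof has to do exactly the sign bookkeeping you postponed, and then either derive this corrected form of (2) or flag the discrepancy explicitly; asserting that the printed formula comes out is the step that fails.
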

\subsection{Quantum graph walk}
We should note that the quantum graph is determined by sequence of edge length $\bs{L}=\{L_{\{ij\}};\{ij\}\in E\}$, and 
boundary conditions at each vertex $\bs{\lambda}=\{\lambda_j;j\in V\}$ and the vector potential with respect to magnetic flux $\bs{A}=\{A_{\{ij\}};\{ij\}\in E\}$.  
\begin{definition}(Quantum graph walk)
We call quantum graph walk with parameters of quantum graph $(\bs{L}, \bs{\lambda}, \bs{A})$ to the A-type QW with flip flop shift 
\[ U^{(\bs{L},\bs{\lambda},\bs{A})}(k)\equiv U^{(A)}_{\pi_{ff}}[H_j(k);j\in V(\mathcal{G})], \]
where 
\begin{equation}\label{m_element}
\langle \bs{e}^{(j)}_m|H_j(k)|\bs{e}^{(j)}_l\rangle= \left(\frac{2}{d_j+\im \lambda_j/k}-\delta_{l,m} \right)e^{\im L_{\{jm\}}(k-A_{(jm)})}. 
\end{equation}
\end{definition}
\begin{remark}
An equivalent expression for $H_j(k)$ is 
\[ H_j(k)
=D_j(k)\left( \frac{2}{d_j+\im \lambda_j/k}J_{d_j}-I_{d_j} \right). \]
where $D_j(k)$ is a diagonal matrix such that $\sum_{m\in N(j)} e^{\im L_{\{jm\}}(k-A_{(jm)})}|\bs{e}^{(j)}_m \rangle\langle \bs{e}^{(j)}_m|$, 
and $J_{d_j}$ is the all $1$ matrix, $I_{d_j}$ is the identity matrix on $\mathcal{H}_j$. 
\end{remark}
\begin{remark}
In the limit of $\bs{L}\downarrow \bs{0}$ with the Neumann boundary condition, the Grover walk appears again. 
Comparing both expressions for the local quantum coins for the Szegedy walk (Eq.~(\ref{SzeCoin})) and quantum graph walk (Eq.~(\ref{m_element})), 
the common class of both walks is only the Grover walk. 
\end{remark}
A general solution for Eq.~(\ref{arcSch}) can be directly solved by using two parameters $a_{(i,j)}, b_{(i,j)}\in \mathbb{C}$, 
\begin{equation}\label{generalSol}
\Psi_{(i,j)}(x)=\left( a_{(i,j)}e^{-\im kx}+b_{(i,j)}e^{\im kx} \right)e^{-\im A_{(i,j)}x}.
\end{equation}
\begin{lemma}\label{b_and_a}
It is hold that 
\begin{equation}\label{b_and_a_eq}
b_{(i,j)}=a_{(j,i)}e^{-\im L_{\{ij\}}(k-A_{(i,j)})}.
\end{equation}
\end{lemma}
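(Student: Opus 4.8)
The plan is to use boundary condition (\ref{c1}), which is the only one of the three conditions that directly couples the amplitudes on an arc $(i,j)$ with those on its reverse $(j,i)$. Substituting the general solution (\ref{generalSol}) into both sides of the identity $\Psi_{(i,j)}(x)=\Psi_{(j,i)}(L_{\{ij\}}-x)$ reduces the equality of two functions to a matching of coefficients of linearly independent exponentials, from which (\ref{b_and_a_eq}) falls out immediately.

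First I would record the sign relation for the vector potential. Since $A_{(ij)}=\mathrm{sgn}(j-i)A_{\{ij\}}$, we have $A_{(j,i)}=-A_{(i,j)}$; this is exactly what makes the reversed-arc amplitudes recombine cleanly. Writing $L=L_{\{ij\}}$ for brevity, the left-hand side is
\[
\Psi_{(i,j)}(x)=a_{(i,j)}e^{-\im(k+A_{(i,j)})x}+b_{(i,j)}e^{\im(k-A_{(i,j)})x},
\]
while on the right-hand side I substitute $x\mapsto L-x$ into the formula for $\Psi_{(j,i)}$ and use $A_{(j,i)}=-A_{(i,j)}$ to obtain, after collecting the $x$-dependence,
\[
\Psi_{(j,i)}(L-x)=a_{(j,i)}e^{-\im(k-A_{(i,j)})L}e^{\im(k-A_{(i,j)})x}+b_{(j,i)}e^{\im(k+A_{(i,j)})L}e^{-\im(k+A_{(i,j)})x}.
\]

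The final step is to match the coefficients of the two linearly independent exponentials $e^{\im(k-A_{(i,j)})x}$ and $e^{-\im(k+A_{(i,j)})x}$; these are independent functions of $x$ whenever $k\neq 0$. Equating the coefficients of $e^{\im(k-A_{(i,j)})x}$ yields exactly
\[
b_{(i,j)}=a_{(j,i)}e^{-\im L(k-A_{(i,j)})},
\]
which is (\ref{b_and_a_eq}). Equating instead the coefficients of $e^{-\im(k+A_{(i,j)})x}$ gives the same relation with $i$ and $j$ interchanged, serving as a consistency check but contributing no new information.

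I do not expect a genuine obstacle here: the statement is a direct consequence of condition (\ref{c1}) together with the explicit form (\ref{generalSol}). The only point demanding care is the bookkeeping of the vector-potential sign $A_{(j,i)}=-A_{(i,j)}$ and the reshuffling of exponents under $x\mapsto L-x$; handling these correctly is what produces the precise phase $e^{-\im L_{\{ij\}}(k-A_{(i,j)})}$ rather than its inverse or complex conjugate.
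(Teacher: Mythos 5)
Your proof is correct and takes essentially the same route as the paper: substitute the general solution (\ref{generalSol}) into condition (\ref{c1}), use $A_{(j,i)}=-A_{(i,j)}$, and match coefficients of the two linearly independent exponentials in $x$. The only cosmetic difference is that the paper first divides out the common factor $e^{-\im A_{(i,j)}x}$ and compares coefficients of $e^{\pm\im kx}$, whereas you keep the full exponents; your explicit observation that the independence of the exponentials requires $k\neq 0$ is a point the paper leaves implicit.
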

\begin{proof}
Substituting Eq.~(\ref{generalSol}) into the condition (\ref{c1}), it is hold that for any $(i,j)\in \mathcal{D}(G)$ and $x\in [0,L_{\{ij\}}]$, 
\begin{equation}\label{identity}
a_{(i,j)}e^{-\im kx}+b_{(i,j)}e^{\im kx}=\left\{ a_{(j,i)}e^{-\im L_{\{ij\}}(k-A_{(i,j)})} \right\} e^{\im kx}
	+\left\{ b_{(j,i)}e^{\im L_{\{ij\}}(k+A_{(i,j)})} \right\} e^{-\im kx}. 
\end{equation}
Thus comparing the coefficients of $e^{-\im kx}$ and $e^{\im kx}$ of LHS with ones of RHS in the identity~(\ref{identity}) with respect to $x\in [0,L_{ij}]$, 
we obtain 
\begin{align}
a_{(i,j)} &= b_{(j,i)}e^{\im L_{\{ij\}}(k+A_{(i,j)})}, \label{identity1} \\
b_{(i,j)} &= a_{(j,i)}e^{-\im L_{\{ij\}}(k-A_{(i,j)})}. \label{identity2} 
\end{align}
Remarking that $A_{(j,i)}=-A_{(i,j)}$, then Eq.~(\ref{identity1}) is equivalent to Eq.~(\ref{identity}), 
we complete the proof. 
\end{proof}
By substituting Eq.~(\ref{b_and_a_eq}) into Eq.~(\ref{generalSol}), we obtain for each $(ij)\in D$, 
\begin{equation}\label{pp}
\Psi_{(ij)}(x)=a_{(ij)}e^{-\im(k+A_{(ij)})x}+a_{(ji)}e^{-\im(k+A_{(ji)})(L_{\{ij\}}-x)}. 
\end{equation}
Therefore $|D|$-parameter $\{a_{f};f\in D\}$ gives the solution for the Schr{\"o}dinger equations. 
We put $\bs{a}_*(k)$ as the array $a_{(ij)}$'s, that is, $\bs{a}_*(k)=\sum_{(ij)\in D}a_{(ij)}|i,j\rangle$. 
On the other hand, for $\bs{x}=(x_{(ij)}; (ij)\in D \mathrm{\;with\;}0\leq x_{ij}\leq L_{\{ij\}})$, and $k\in \mathbb{R}$, 
let the array of eigenfunctions $\Psi_{(ij)}(x_{(ij)})$'s be 
$\bs{\Psi}_*(k,\bs{x}) \equiv \sum_{i,j\in D}\Psi_{(i,j)}(x_{i,j})|i,j\rangle$. 
Then Eq.~(\ref{pp}) implies that 
\begin{equation} \label{wq}
\bs{\Psi}_*(k,\bs{x})=\left\{D_1(k,\bs{x})+D_2(k,\bs{x})S\right\}\bs{a}_*(k), 
\end{equation}
where $D_j(k,\bs{x})$ $(j\in \{1,2\})$ are diagonal matrix defined by for $f,f'\in D(\mathcal{G})$, 
\begin{align*}
(D_1)_{f,f'} &=  \delta_{f,f'}e^{-\im (k+A_f)x_f}, \\
(D_2)_{f,f'} &=  \delta_{f,f'}e^{-\im (k-A_f)(L_{f}-x_f)}. 
\end{align*}
Now we will investigate a necessary and sufficient condition of $\bs{a}_*(k)$ for getting 
non-trivial solution of quantum graph $\bs{\Psi}_*(k,\bs{x})$ ($\neq \bs{0}$). 
One of its answers is our main result in Theorem \ref{mainthm}. 
The following theorem is a collection of equivalent statements including Theorem \ref{mainthm}. 
\begin{theorem}\label{QG}
The following three statements are equivalent: 
\begin{enumerate}
\item\label{non_tri_sol} 
In the quantum graph with parameters $(\bs{L},\bs{\lambda},\bs{A})$, 
the Schr\"odinger equation (\ref{arcSch}) with the boundary conditions (\ref{c1}) - (\ref{c3}) has a non-trivial solution 
$\{\Psi_{(i,j)}(x)\}_{(i,j)\in \mathcal{D}(G)}$. 
\item\label{Amb} $\bs{a}_*(k)$ is an eigenvector of the quantum graph walk $U^{(\bs{L},\bs{\lambda},\bs{A})}(k)$ with eigenvalue $1$. 
\item\label{Sat} It is hold that
\begin{equation}
\mathrm{det}(I_{|V|}-T_{|V|}+D_{|V|})\prod_{j=1}^{|E|} (1-e^{2\im kL_{e_j}})=0, 
\end{equation}
where for $i,j\in V(\mathcal{G})$, 
\begin{align}
\left(T_{|V|}\right)_{i,j} &= \frac{e^{-\im L_{\{ij\}}(k+A_{(i,j)})}(1+e^{-\im \rho_j(k)})/\sqrt{d_id_j}}{1-e^{2\im kL_{\{ ij\}}}} \mathbf{1}_{\{(i,j)\in \mathcal{D}(G)\}}(i,j), \\
\left(D_{|V|}\right)_{i,j} &= \sum_{l\in N(i)} \frac{e^{2\im kL_{\{il\}}}(1+e^{-\im \rho_i(k)})/d_i}{1-e^{2\im kL_{\{il\}}}}\mathbf{1}_{\{i=j\}}(i,j).
\end{align}
\end{enumerate}
Here $e^{\im \rho_j(k)}=\{1+\im\lambda_j/(kd_j)\}/\{1-\im\lambda_j/(kd_j)\}$. 
\end{theorem}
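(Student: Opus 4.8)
The plan is to prove the two biconditionals $(\ref{non_tri_sol})\Leftrightarrow(\ref{Amb})$ and $(\ref{Amb})\Leftrightarrow(\ref{Sat})$ separately. Throughout I fix $k\neq 0$ and work with the parametrization (\ref{pp}), which already builds condition (\ref{c1}) into the array $\bs{a}_*(k)$ by Lemma~\ref{b_and_a}. Consequently a candidate solution is completely encoded by $\bs{a}_*(k)$, and it is non-trivial precisely when $\bs{a}_*(k)\neq \bs{0}$: for $k\neq 0$ the two exponentials $e^{-\im(k+A_{(ij)})x}$ and $e^{+\im(k-A_{(ij)})x}$ in (\ref{pp}) are linearly independent, so the map $\bs{a}_*\mapsto \{\Psi_{(ij)}\}$ has trivial kernel. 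Hence only the two vertex conditions (\ref{c2}) and (\ref{c3}) remain to be imposed, and the whole theorem amounts to translating them into the walk.

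For $(\ref{non_tri_sol})\Leftrightarrow(\ref{Amb})$ I would evaluate the vertex data at $x=0$. From (\ref{pp}), $\Psi_{(ij)}(0)=a_{(ij)}+a_{(ji)}e^{-\im L_{\{ij\}}(k-A_{(ij)})}$; writing $c_{(ij)}:=a_{(ji)}e^{-\im L_{\{ij\}}(k-A_{(ij)})}$ (the incoming amplitude $b_{(ij)}$ of Lemma~\ref{b_and_a}), condition (\ref{c2}) reads $a_{(ij)}+c_{(ij)}=\phi_i$ for all $j\in N(i)$. Differentiating (\ref{pp}) gives $(-\im\, d/dx+A_{(i,j)})\Psi_{(i,j)}(x)\big|_{x=0}=-k\,(a_{(ij)}-c_{(ij)})$, so (\ref{c3}) becomes $\sum_{j\in N(i)}(a_{(ij)}-c_{(ij)})=(\im\lambda_i/k)\phi_i$. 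Adding the summed form of (\ref{c2}) to this eliminates $c$ and yields $\phi_i=\frac{2}{d_i+\im\lambda_i/k}\sum_{j}a_{(ij)}$; substituting $c_{(ij)}=\phi_i-a_{(ij)}$ then collapses both conditions into the single family $a_{(mi)}e^{-\im L_{\{im\}}(k-A_{(im)})}=\frac{2}{d_i+\im\lambda_i/k}\sum_{j}a_{(ij)}-a_{(im)}$. Comparing this with the coin (\ref{m_element}) and with the explicit action $U\bs{a}_*=\bs{a}_*\Leftrightarrow a_{(mi)}=\sum_{j\in N(i)}\langle \bs{e}_m^{(i)}|H_i(k)|\bs{e}_j^{(i)}\rangle a_{(ij)}$ of the A-type flip-flop walk, I would verify term by term that the family is exactly the eigenvalue-$1$ equation. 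This identification is the heart of the theorem (and of Theorem~\ref{mainthm}).

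For $(\ref{Amb})\Leftrightarrow(\ref{Sat})$ the goal is to reduce the $2|E|$-dimensional equation $U\bs{a}_*=\bs{a}_*$ to a $|V|$-dimensional secular condition. Using the vertex amplitudes $\phi_i=\frac{2}{d_i+\im\lambda_i/k}\sum_{j}a_{(ij)}$, I would write at the two ends of each edge $\{i,j\}$ the relations $a_{(ij)}=\phi_i-a_{(ji)}e^{-\im L_{\{ij\}}(k-A_{(ij)})}$ and $a_{(ji)}=\phi_j-a_{(ij)}e^{-\im L_{\{ij\}}(k+A_{(ij)})}$, and solve this $2\times 2$ system. Whenever $1-e^{-2\im kL_{\{ij\}}}\neq 0$ this gives $a_{(ij)}=\big(\phi_i-\phi_j e^{-\im L_{\{ij\}}(k-A_{(ij)})}\big)/\big(1-e^{-2\im kL_{\{ij\}}}\big)$. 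Feeding this back into $\phi_i=\frac{2}{d_i+\im\lambda_i/k}\sum_{j}a_{(ij)}$ produces a closed homogeneous system for $(\phi_i)_{i\in V}$; using the identity $\frac{2}{d_j+\im\lambda_j/k}=(1+e^{-\im\rho_j(k)})/d_j$ one recognizes its coefficient matrix as $I_{|V|}-T_{|V|}+D_{|V|}$ (possibly after a diagonal rescaling of the $\phi_i$ that symmetrizes the off-diagonal block into the stated $\sqrt{d_id_j}$ normalization). Thus, on the locus where $\prod_{j}(1-e^{2\im kL_{e_j}})\neq 0$, a non-trivial $\bs{a}_*$ exists iff $\det(I_{|V|}-T_{|V|}+D_{|V|})=0$.

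\textbf{Main obstacle.} The delicate point is the \emph{degenerate edges} with $1-e^{2\im kL_{\{ij\}}}=0$, where the $2\times 2$ elimination is singular and the entries of $T_{|V|}$ and $D_{|V|}$ develop poles. I would handle these by observing that if $\phi\equiv \bs{0}$ while $\bs{a}_*\neq \bs{0}$, then the two endpoint relations force $a_{(ij)}\big(1-e^{-2\im kL_{\{ij\}}}\big)=0$, so some edge must satisfy $e^{2\im kL_{\{ij\}}}=1$; such edge-localized modes are invisible to the vertex reduction but are detected exactly by the vanishing of $\prod_{j}(1-e^{2\im kL_{e_j}})$. Multiplying $\det(I_{|V|}-T_{|V|}+D_{|V|})$ by this product simultaneously clears the poles of $T_{|V|},D_{|V|}$ and accounts for the localized modes, turning the criterion into the single pole-free secular function of statement (\ref{Sat}). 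Reconciling the generic vertex reduction with these degenerate edges into one entire determinant, and checking that clearing the poles introduces no spurious zeros, is the step I expect to require the most care.
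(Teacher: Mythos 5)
Your treatment of $(\ref{non_tri_sol})\Leftrightarrow(\ref{Amb})$ is correct and is essentially the paper's own argument (its Lemma on the boundary conditions, there stated as Eq.~(\ref{ab1})): the same manipulation of (\ref{c2}), (\ref{c3}) via $\phi_i$, collapsed into the fixed-point equation of the flip-flop A-type walk with coin (\ref{m_element}). Your route to $(\ref{Amb})\Leftrightarrow(\ref{Sat})$, however, differs genuinely from the paper's. You eliminate the arc amplitudes edge by edge in favor of the vertex values $\phi_i$ and identify the reduced $|V|\times|V|$ coefficient matrix with $I_{|V|}-T_{|V|}+D_{|V|}$ up to a diagonal conjugation (which indeed leaves the determinant unchanged). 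The paper never solves the linear system this way: it proves the algebraic factorization $\mathrm{det}(I_{2|E|}-t\widetilde{U}^{(A)}(k))=\mathrm{det}(I_{|V|}-tT_{|V|}(t)+t^2D_{|V|}(t))\prod_{\{ij\}\in E}(1-t^2e^{2\im kL_{\{ij\}}})$ (Lemma \ref{sato_lemma}), exploiting the rank-one structure $H_j=D_j\{(1+e^{-\im\rho_j})\Pi_j-I\}$ and $\det(I-XY)=\det(I-YX)$ to drop from dimension $2|E|$ to $|V|$, and then reads statement (\ref{Sat}) simply as $\det(I-U)=0$. On the generic locus where $e^{2\im kL_{\{ij\}}}\neq 1$ for every edge, your elimination is valid and reproduces the same secular determinant, so there the two approaches agree.

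The genuine gap is exactly the degenerate-edge case you flag, and the fix you sketch is wrong: it is \emph{not} true that the vanishing of $\prod_{e}(1-e^{2\im kL_{e}})$ "detects exactly" solutions invisible to the vertex reduction. Counterexample: take the path on three vertices $1-2-3$ with Neumann conditions ($\bs{\lambda}=\bs{0}$, $\bs{A}=\bs{0}$). The walk $U$ is then a weighted $4$-cycle on the arcs $|1,2\rangle\to|2,1\rangle\to|3,2\rangle\to|2,3\rangle\to|1,2\rangle$ with weight product $e^{2\im k(L_1+L_2)}$, so $\det(I-U)=1-e^{2\im k(L_1+L_2)}$ and eigenvalue $1$ exists iff $e^{2\im k(L_1+L_2)}=1$ (physically: this quantum graph is just an interval of length $L_1+L_2$). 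Choosing $e^{2\im kL_1}=1$ but $e^{2\im kL_2}\neq 1$, your product vanishes while no non-trivial solution exists. The reason your localized-mode argument fails: a mode with $\phi\equiv\bs{0}$ must also satisfy current conservation (\ref{c3}), which with $b_{(ij)}=-a_{(ij)}$ forces $\sum_{j\in N(i)}a_{(ij)}=0$ at \emph{every} vertex; a single degenerate edge cannot support this (one needs, e.g., a cycle of degenerate edges). What actually happens is that the pole of $\det(I_{|V|}-T_{|V|}+D_{|V|})$ along $1-e^{2\im kL_{e}}=0$ generically cancels the zero of the product, so statement (\ref{Sat}) must be interpreted as the vanishing of the single entire function obtained after cancellation — i.e.\ of $\det(I-U)$ itself — not as "one of the two factors vanishes". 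The paper's identity-based proof delivers this interpretation automatically, since it is an identity of rational functions in $t$ whose left-hand side is a polynomial; your edge-by-edge elimination cannot conclude in the degenerate case without an additional limiting or identity argument, and as stated it would prove a false equivalence.
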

\begin{proof}
At first we give the following lemma. 
\begin{lemma}\label{b_and_a_2}
The boundary conditions (\ref{c1})-(\ref{c3}) are hold 
for all $(i,j)\in \mathcal{D}(G)$, 
\begin{align}
\Leftrightarrow \;\; a_{(ij)} &= \sum_{l\in N(i)}\left( \frac{2}{d_i-\im \lambda_i/k}-\delta_{lj} \right)e^{-\im L_{\{il\}}(k-A_{(il)})}a_{(li)}  \label{ab1}
\end{align}
\end{lemma}
\begin{proof}
We assume that the boundary conditions (\ref{c2}) and (\ref{c3}) are hold. 
From condition (\ref{c2}), substituting $x=0$ into Eq.~(\ref{generalSol}), 
\begin{equation}\label{from_c2}
\Psi_{(i,j)}(0)=a_{(i,j)}+b_{(i,j)}=\phi_i,\;\;j\in N(i).
\end{equation}
Taking a summation of Eq.~(\ref{from_c2}) over all the neighbors of $i$, 
\begin{equation}\label{from_c2_02}
\sum_{j\in N(i)}\left(a_{(i,j)}+b_{(i,j)}\right)=d_i\phi_i. 
\end{equation}
From Eq.~(\ref{generalSol}), 
\[ \frac{d}{dx} \Psi_{(i,j)}(x)\bigg|_{x=0}=-\im(k+A_{(i,j)})a_{(i,j)}+\im(k-A_{(i,j)})b_{(i,j)}, \] 
Inserting it into condition (\ref{c3}), we obtain 
\begin{equation}\label{from_c3}
-\im k \sum_{j\in N(i)} (a_{(i,j)}-b_{(i,j)})=\lambda_i\phi_i. 
\end{equation}
Combining Eq.~(\ref{from_c2_02}) with Eq.~(\ref{from_c3}), 
\[ \phi_i=-\frac{\im k}{\lambda_i} \sum_{j\in N(i)}(a_{(i,j)}-b_{(i,j)})=\frac{1}{d_i} \sum_{j\in N(i)}(a_{(i,j)}+b_{(i,j)}), \]
which implies that 
\begin{equation}\label{from_c2_and_c3}
\sum_{j\in N(i)}a_{(i,j)}=e^{\im \rho_i(k)}\sum_{j\in N(i)}b_{(i,j)}. 
\end{equation}
By using Eqs.~(\ref{from_c2}) (\ref{from_c2_02}) and (\ref{from_c2_and_c3}), 
\begin{align}
a_{(i,j)} &= \phi_i-b_{(i,j)}
	   = \frac{1}{d_i}\sum_{l\in N(i)} \left( a_{(i,l)}+b_{(i,l)} \right)-b_{(i,j)}, \notag \\
          &= \sum_{l\in N(i)}\left( \frac{2}{d_i-\im \lambda_i/k}-\delta_{l,j} \right)b_{(i,l)}. \label{hasegawan}
\end{align}
Conversely, under the assumption that Eq.~(\ref{hasegawan}) is hold, 
we can easily check that the conditions (\ref{c2}) and (\ref{c3}) are satisfied. 
Then inserting Lemma \ref{b_and_a} into Eq.~(\ref{hasegawan}), we complete the proof. 
\end{proof}
%
%
Next, we will give a proof that (\ref{non_tri_sol}) iff (\ref{Amb}). 
By using a matrix representation of the quantum coin at vertex $i$ in Eq.~(\ref{m_element}), RHS of Eq.~(\ref{ab1}) is rewritten by 
\[ \sum_{l\in N(i)}\langle \bs{e}^{(i)}_j|H_i^\dagger(k)| \bs{e}^{(i)}_l \rangle a_{(l,i)}, \]
which implies that $\bs{a}_*(k)=C^\dagger(k) S_{\pi_{ff}}\bs{a}_*(k)$ with $C(k)=\sum_{j\in V(\mathcal{G})}\oplus H_j(k)$. 
Note that from Lemma~\ref{time_reverse} the time reverse of the quantum graph walk is the following G-type quantum walk
\begin{equation}
\left(U^{(\bs{\bs{L}},\bs{\lambda},\bs{A})}\right)^{-1}=U^{(G)}_{\pi_{ff}}[H_j^\dagger (k);j\in V]. 
\end{equation}
Thus $\bs{a}_*(k)$ is the eigenvector of eigenvalue $1$ for both $U_{\pi_{ff}}^{(G)}[H_j^\dagger(k);k\in V]$ and 
$U^{(\bs{\bs{L}},\bs{\lambda},\bs{A})}\equiv U_{\pi_{ff}}^{(A)}[H_j (k);j\in V]$. 
Finally, we show that (\ref{Amb}) iff (\ref{Sat}). To do so, we give the following lemma: 
When we take $\alpha_{jl}=1/\sqrt{d_j}$ ($l\in N(j)$) and $t=1$ in the following lemma, then we obtain the statement of (\ref{Sat})
\begin{lemma}\label{sato_lemma}
Let $\widetilde{U}^{(A)}(k)$ be a generalized quantum graph walk whose quantum coin is denoted by 
\[ H_j (k)=D_j (k)\left\{(1+e^{-\im \rho_j(k)})\Pi_j-I_{d_j}\right\},\;\;\;(j\in V(\mathcal{G})),  \]
where $\Pi_j$ is a projection onto a unit vector 
$|\bs{\alpha}_j\rangle=\sum_{l\in N(j)}\alpha_{jl}|\bs{e}^{(j)}_l\rangle \in \mathcal{H}_j$ with $\sum_{l\in N(j)}|\alpha_{jl}|^2=1$. 
Then we have 
\begin{equation}
\mathrm{det}\left(I_{2|E|}-t\widetilde{U}^{(A)}(k)\right)
	=\mathrm{det}\left(I_{|V|}-tT_{|V|}(t)+t^2D_{|V|}(t)\right)\prod_{\{ij\}\in E} (1-t^2e^{-2\im kL_{\{ij\}}})
\end{equation}
where 
\begin{align}
\left(T_{|V|}(t)\right)_{i,j} &= \frac{e^{\im L_{\{ij\}}(k+A_{(i,j)})}(1+e^{-\im \rho_j(k)})\alpha_{ji}\overline{\alpha_{ij}}}{1-t^2e^{2ikL_{\{ ij\}}}} \mathbf{1}_{\{(i,j)\in \mathcal{D}(G)\}}(i,j), \\
\left(D_{|V|}(t)\right)_{i,j} &= \sum_{l\in N(i)} \frac{e^{2ikL_{\{il\}}}(1+e^{-\im \rho_i(k)})|\alpha_{ij}|^2}{1-t^2 e^{2ikL_{\{il\}}}}\mathbf{1}_{\{i=j\}}(i,j)
\end{align}
\end{lemma}
\begin{remark}
If we choose the unit vector $|\bs{\alpha}_j\rangle$ on each $\mathcal{H}_j$ as $|\bs{\alpha}_j\rangle=1/\sqrt{d_j}\sum_{l\in N(j)}|\bs{e}^{(j)}_l\rangle$, then the walk becomes a quantum graph walk. 
On the other hand, if we put the parameters $\bs{\lambda}=\bs{0}$, $\bs{L}=\bs{0}$, and $\alpha_{ij}\in [0,1]$ for all $(i,j)\in \mathcal{D}$, 
then the walk becomes a Szegedy walk. 
\end{remark}
In the following, we prove Lemma~\ref{sato_lemma}. 
For a sequence $(c_{(i,j)})_{(i,j)\in D(\mathcal{G})}$ and a sequence $(c_i)_{i\in V(\mathcal{G})}$, 
we denote $\mathcal{D}_{D}[(c_{(i,j)})_{(i,j)\in D(\mathcal{G})}]$ and $\mathcal{D}_{V}[(c_i)_{i\in V(\mathcal{G})}]$ as 
the following diagonal matrices on $\ell^2(D)$ and $\ell^2(V)$, respectively; 
\begin{align*}
\mathcal{D}_{D}[(c_{(i,j)})_{(i,j)\in D(\mathcal{G})}]
	= \sum_{(i,j)\in D(\mathcal{G})}c_{(i,j)}|i,j\rangle \langle i,j|, \;\;
\mathcal{D}_{V}[(c_i)_{i\in V(\mathcal{G})}]
	= \sum_{i\in V(\mathcal{G})}c_i|i\rangle \langle i|. 
\end{align*}
We will use the relation
\begin{equation}\label{changeorder}
S\mathcal{D}_{D}[(c_{(i,j)})_{(i,j)\in D(\mathcal{G})}]=\mathcal{D}_{D}[(c_{(j,i)})_{(i,j)\in D(\mathcal{G})}]
\end{equation}
Let $A$ as a matrix representation of a map $\ell^2(V) \to \ell^2(D)$ such that $i \mapsto |a_i\rangle$ for every $i\in V$, that is, 
$A = \sum_{j \in V}|a_j\rangle \langle j|$.  
Put 
\begin{align}\label{Diag}
B = S\cdot \widetilde{\mathcal{D}}_D \cdot A \cdot \widetilde{\mathcal{D}}_V, 
\end{align}
where $\widetilde{\mathcal{D}}_D= \mathcal{D}_D\left[\exp[\im L_{\{ij\}}(k-A_{(ij)})]: (ij)\in D\right]$, 
and $\widetilde{\mathcal{D}}_V=\mathcal{D}_V[1+e^{-\im \rho_j(k)}: j\in V ]$. 
The coin operator on $\ell^2(D)$ is described by 
\begin{equation}
C=\widetilde{\mathcal{D}}_D\left(A\widetilde{\mathcal{D}}_V A^\dagger-I_{|V|}\right).
\end{equation}
By using this, 
\begin{align}\label{eq1}
\mathrm{det}(I_{2|E|}-tU^{(A)}(k))
	&= \mathrm{det}\left(I_{2|E|}-tS\widetilde{\mathcal{D}}_D(A\widetilde{\mathcal{D}}_V A^\dagger-I_{|V|})\right) \notag \\
        &= \mathrm{det}(I_{2|E|}+tS\widetilde{\mathcal{D}}_D)\cdot \mathrm{det}\left(I_{2|E|}-t(I_{2|E|}+tS\widetilde{\mathcal{D}}_D)^{-1}B A^\dagger\right) \notag \\
        &= \mathrm{det}(I_{2|E|}+tS\widetilde{\mathcal{D}}_D)\cdot \mathrm{det}\left(I_{|V|}-t A^\dagger (I_{2|E|}+tS\widetilde{\mathcal{D}}_D)^{-1}B \right) .  
\end{align}
We should note that 
\begin{equation}
I_{2|E|}+tS\widetilde{\mathcal{D}}_D \cong \sum_{\{ij\}\in E} \oplus
	\begin{bmatrix} 1 & te^{\im L_{\{ij\}}(k-A_{(ji)})} \\ te^{\im L_{\{ij\}}(k-A_{(ij)})} & 1 \end{bmatrix}
\end{equation}
Put $\Delta_{\{ij\}}(t)=1-t^2e^{2\im kL_{\{ij\}}}$. 
Then we have 
\begin{align}
\mathrm{det}(I_{2|E|}+tS\mathcal{D}_D) &= \prod_{\{ij\}}\Delta_{\{ij\}}(t). \\
\left(I_{2|E|}+tS\widetilde{\mathcal{D}}_D \right)^{-1} 
	&= \widetilde{\mathcal{D}}_D^{(1)}\left(I-t\widetilde{\mathcal{D}}_D^{(2)}S\right), \label{popo}
\end{align}
where 
\[ \widetilde{\mathcal{D}}_D^{(1)} = \widetilde{\mathcal{D}}_D[\Delta_{\{ij\}}^{-1}(t);(ij)\in D],\;\; 
\widetilde{\mathcal{D}}_D^{(2)} = \widetilde{\mathcal{D}}_D[e^{\im L_{\{ji\}}(k-A_{(ji)})};(ij)\in D]. \]
We applied Eq.~(\ref{changeorder}) to the expression of Eq.~(\ref{popo}). 
By using these notations we rewrite $A^\dagger (I_{2|E|}+tS\widetilde{\mathcal{D}}_D)^{-1}B$ in Eq.~(\ref{eq1}) by
\[ A^\dagger (I_{2|E|}+tS\widetilde{\mathcal{D}}_D)^{-1}B
	= A^\dagger \widetilde{\mathcal{D}}_D^{(1)} B-tA^\dagger \widetilde{\mathcal{D}}_D^{(1)} \widetilde{\mathcal{D}}_D^{(2)} SB. \]
We can express the the first and second terms as 
\begin{align} 
A^\dagger \widetilde{\mathcal{D}}_D^{(1)} B
	&= A^\dagger \widetilde{\mathcal{D}}_D^{(1)} S \widetilde{\mathcal{D}}_D A \widetilde{\mathcal{D}}_V 
         = A^\dagger \widetilde{\mathcal{D}}_D\left[\left(e^{\im L_{\{ij\}}(k-A_{(ji)})}\Delta_{\{ij\}}^{-1}(t)\right)_{(ij)\in D}\right] SA \widetilde{\mathcal{D}}_V \notag \\
	&=\sum_{(ij)\in D} \overline{\alpha_{ij}}\cdot e^{\im L_{\{ij\}}(k-A_{(ji)})}\Delta_{\{ij\}}^{-1}(t) \cdot \alpha_{ji}\cdot (1+e^{-\im\rho_j(k)})
        	|i\rangle \langle j| \notag  \\
        &= T_{|V|}(t). \\
A^\dagger \widetilde{\mathcal{D}}_D^{(1)}\widetilde{\mathcal{D}}_D^{(2)} SB 
	&= A^\dagger \widetilde{\mathcal{D}}_D^{(1)}\widetilde{\mathcal{D}}_D^{(2)} S\cdot S \widetilde{\mathcal{D}}_D A \widetilde{\mathcal{D}}_V 
	 = A^\dagger \widetilde{\mathcal{D}}_D^{(1)}\widetilde{\mathcal{D}}_D^{(2)}\widetilde{\mathcal{D}}_D A \widetilde{\mathcal{D}}_V \notag \\
        &= A^\dagger \widetilde{\mathcal{D}}_D\left[e^{2\im kL_{\{ij\}}}/\Delta_{\{ij\}}:(ij)\in D\right] A \widetilde{\mathcal{D}}_V \notag \\
        &= \sum_{j\in V} \left(\sum_{i\in N(j)} (1+e^{\im \rho_j(k)}) |\alpha_{ij}|^2 e^{2\im kL_{\{ij\}}}/\Delta_{\{ij\}} \right) |j\rangle \langle j| \notag \\
        &= D_{|V|}(t). 
\end{align}
Then we complete the proof of Theorem \ref{QG}. 
\end{proof}
\subsection{ Necessary and sufficient conditions for quantum graph}
Finally, we mention the relation between quantum walk and quantum evolution map defined by \cite{Smilansky,Smilansky2}. 
In this paper, 
we have defined the A-type QW, $U^{(\bs{L,\lambda,A})}(k) \equiv U^{(A)}_{\pi_{ff}}[H_j(k);j\in V]$ with local quantum coins determined by 
the parameters of corresponding quantum graph $(\bs{L,\lambda,A})$ (see Eq.~(\ref{m_element})), as quantum graph walk. 
Recall that the statement of (2) in Theorem \ref{QG} is 
\begin{equation}\label{equivalent} U^{(\bs{L,\lambda,A})}\bs{a}_*(k)=\bs{a}_*(k) \end{equation}
which is an equivalent expression for satisfying the corresponding quantum graph. 
Since 
\[ U^{(A)}_{\pi_{ff}}[H_j(k);j\in V]=S_{\pi_{ff}}U^{(G)}_{\pi_{ff}}[H_j(k);j\in V]S_{\pi_{ff}}\;\;\mathrm{and}\;\;S_{\pi_{ff}}^2=I, \]
Eq.~(\ref{equivalent}) is reexpressed by   
\begin{equation}\label{GS} 
U^{(G)}_{\pi_{ff}}[H_j(k);j\in V]\bs{b}_*(k)=\bs{b}_*(k), 
\end{equation} 
where $\bs{b}_*(k)=S_{\pi_{ff}}\bs{a}_*(k)$. 
Combining Lemma \ref{time_reverse} with Eq.~(\ref{GS}), we can give equivalent statements to (1) in Theorem \ref{QG} as follows:
\begin{proposition}
The following statements are necessary and sufficient conditions for satisfying quantum graph 
\begin{align*}
U^{(A)}_{\pi_{ff}}[H_j(k);j\in V]\bs{a}_*(k)=\bs{a}_*(k) &\Leftrightarrow 
U^{(G)}_{\pi_{ff}}[H_j(k)^\dagger;j\in V]\bs{a}_*(k)=\bs{a}_*(k) \\
\Leftrightarrow U^{(A)}_{\pi_{ff}}[H_j(k)^\dagger;j\in V]\bs{b}_*(k) = \bs{b}_*(k) &\Leftrightarrow
U^{(G)}_{\pi_{ff}}[H_j(k);j\in V]\bs{b}_*(k) = \bs{b}_*(k).
\end{align*} 
\end{proposition}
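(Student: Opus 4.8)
The plan is to shuttle between the four statements using only the unitarity of each walk operator together with Lemma~\ref{time_reverse}, while the involution property $S_{\pi_{ff}}^2=I$ (equivalently $S_{\pi_{ff}}^\dagger=S_{\pi_{ff}}$) carries information between $\bs{a}_*(k)$ and $\bs{b}_*(k)=S_{\pi_{ff}}\bs{a}_*(k)$. I will take statement (1), namely $U^{(A)}_{\pi_{ff}}[H_j(k)]\bs{a}_*(k)=\bs{a}_*(k)$, as the hub and reach each of the remaining three from it.

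The first observation is that every operator occurring here is unitary, being a product of the unitary coin $\sum_j\oplus H_j(k)$ (or its adjoint) with the permutation $S_{\pi_{ff}}$. Consequently, for a unitary $U$ the equation $U\bs{v}=\bs{v}$ is equivalent to $U^{-1}\bs{v}=\bs{v}$: applying $U^{-1}$ to both sides gives one direction, and applying $U$ gives the other. This lets me freely replace any of the four statements by the corresponding statement for the inverse operator acting on the same vector.

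To pass from (1) to (2), I invoke Lemma~\ref{time_reverse} with $J=A$ and $H_j^{-1}(k)=H_j^\dagger(k)$ (unitarity of the coin), which gives $\big(U^{(A)}_{\pi_{ff}}[H_j(k)]\big)^{-1}=U^{(G)}_{\pi_{ff}}[H_j^\dagger(k)]$; by the unitarity remark, (1) is then equivalent to statement (2). For the passage (1) $\Leftrightarrow$ (4) I use the dual relation of Lemma~\ref{dual} specialized to the flip--flop shift, $U^{(A)}_{\pi_{ff}}[H_j(k)]=S_{\pi_{ff}}\,U^{(G)}_{\pi_{ff}}[H_j(k)]\,S_{\pi_{ff}}$: conjugating statement (1) by $S_{\pi_{ff}}$ and using $S_{\pi_{ff}}^2=I$ turns it into $U^{(G)}_{\pi_{ff}}[H_j(k)]\,(S_{\pi_{ff}}\bs{a}_*(k))=S_{\pi_{ff}}\bs{a}_*(k)$, i.e.\ statement (4) with $\bs{b}_*(k)=S_{\pi_{ff}}\bs{a}_*(k)$. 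Finally, to reach (3) from (4) I apply Lemma~\ref{time_reverse} once more, now with $J=G$, giving $\big(U^{(G)}_{\pi_{ff}}[H_j(k)]\big)^{-1}=U^{(A)}_{\pi_{ff}}[H_j^\dagger(k)]$; the unitarity remark then converts (4) into statement (3).

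There is no substantive analytic obstacle: the content is entirely carried by Lemmas~\ref{dual} and~\ref{time_reverse}, both already proved, and by the elementary unitarity equivalence. The one place demanding care is the bookkeeping---keeping the daggers on the coins (never on $S_{\pi_{ff}}$), and correctly pairing $\bs{a}_*(k)$ with the two operators built from $\{H_j(k)\}$ via A-type and $\{H_j^\dagger(k)\}$ via G-type, and $\bs{b}_*(k)$ with the mirror pairing---so that each equivalence lands on the intended statement rather than a transposed or conjugated variant.
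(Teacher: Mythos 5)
Your proof is correct and takes essentially the same route as the paper: the paper likewise passes from statement (1) to statement (4) via the flip-flop conjugation $U^{(A)}_{\pi_{ff}}[H_j(k)]=S_{\pi_{ff}}U^{(G)}_{\pi_{ff}}[H_j(k)]S_{\pi_{ff}}$ with $S_{\pi_{ff}}^2=I$, and then obtains the daggered statements (2) and (3) by applying Lemma~\ref{time_reverse} together with the (implicit) fact that a unitary operator and its inverse have the same fixed vectors. Your write-up merely makes explicit the inverse/fixed-point step that the paper leaves unstated.
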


The G-type QW, $U^{(G)}_{\pi_{ff}}[H_j(k);j\in V]$, is nothing but the ``quantum evolution map" in \cite{Smilansky,Smilansky2}. 
More concretely, the quantum evolution map is denoted by $\mathcal{U}_B(k)\equiv U^{(G)}_{\pi_{ff}}[H_j(k);j\in V]=\mathcal{T}(k)\mathcal{S}(k)$,
where $\mathcal{T}(k)$ and $\mathcal{S}(k)$ are called bond propagation matrix, and graph scattering matrix in their paper, respectively. 
The correspondence between the Simlansky's quantum evolution map and the G-type QW as follows: 
\begin{equation}
	\mathcal{T}(k)=C[\sigma_j;j\in V]S,\;\;\mathcal{S}(k)=\widetilde{\mathcal{D}}_D
\end{equation}
where for $l,m\in N(j)$, 
\[ \langle \bs{e}^{(j)}_m|\sigma_j|\bs{e}^{(j)}_l\rangle= \frac{2}{d_j+\im \lambda_j/k}-\delta_{l,m}, \]
and $\widetilde{\mathcal{D}}_D$ is defined in Eq.~(\ref{Diag}). 

We will be able to see more detailed discussions around here and new insight into quantum walks through the quantum graphs 
in our next papers \cite{HKSS_II,HKSS_III}. 

%


\par
\
\par\noindent
\noindent
{\bf Acknowledgments.}
YuH was supported in part by the Grant-in-Aid for Scientific Research (C) 20540133 and (B) 24340031 from 
Japan Society for the Promotion of Science. 
NK and IS also acknowledge financial supports of the Grant-in-Aid for Scientific Research (C) from
Japan Society for the Promotion of Science (Grant No. 24540116 and No. 23540176, respectively). 
\par
\
\par

\begin{small}
\bibliographystyle{jplain}

\end{small}




\end{document}